\title{Identity Testing for Stochastic Languages}
\author{Smayan Agarwal}{Ashoka University}{smayan.agarwal_ug25@ashoka.edu.in}{}{}
\author{Shobhit Singh}{Ashoka University}{shobhit.singh@ashoka.edu.in}{}{}
\author{Aalok Thakkar}{Ashoka University}{aalok.thakkar@ashoka.edu.in}{}{}
\authorrunning{Agarwal, Singh \& Thakkar}
\keywords{Identity Testing, Stochastic Languages, Weighted Automata} 
\newcommand{\semantics}[1]{{\llbracket #1 \rrbracket}}
\newcommand{\calA}{\mathcal{A}}
\newcommand{\N}{\mathbb{N}}
\newcommand{\R}{\mathbb{R}}
\newcommand{\Stoch}{\mathcal{S}}
\newcommand{\pseries}[2]{#1\langle\langle #2 \rangle\rangle}
\begin{document}

\maketitle

\begin{abstract}
\label{sec:abstract}
Determining whether an unknown distribution matches a known reference is a cornerstone problem in distributional analysis. While classical results establish a rigorous framework in the case of distributions over finite domains, real-world applications in computational linguistics, bioinformatics, and program analysis demand testing over infinite combinatorial structures, particularly strings. In this paper, we initiate the theoretical study of identity testing for stochastic languages, bridging formal language theory with modern distribution property testing.

We first propose a polynomial-time algorithm to verify if a finite state machine represents a stochastic language, and then prove that rational stochastic languages can approximate an arbitrary probability distribution. Building on these representations, we develop a truncation-based identity testing algorithm that distinguishes between a known and an unknown distributions with sample complexity $\widetilde{\Theta}\left( \frac{\sqrt{n}}{\varepsilon^2} + \frac{n}{\log n} \right)$ where $n$ is the size of the truncated support. Our approach leverages the exponential decay inherent in rational stochastic languages to bound truncation error, then applies classical finite-domain testers to the restricted problem.

This work establishes the first identity testing framework for infinite discrete distributions, opening new directions in probabilistic formal methods and statistical analysis of structured data.
\end{abstract}

\section{Introduction}
\label{sec:introduction}
The problem of identity testing, determining whether an unknown distribution matches a known reference distribution has received significant attention in theoretical computer science. Classically, identity testing considers distributions over finite domains, where the goal is to distinguish between the case where two distributions are identical versus $\varepsilon$-far in total variation distance, using as few samples as possible from the unknown distribution \cite{batu2001testing}. Optimal algorithms for this problem achieve sample complexity of $O(\sqrt{n}/\varepsilon^2)$ for distributions over domains of size $n$ \cite{paninski2008simple,diakonikolas2016collision}, and this framework has been extended to various structured settings including monotone, log-concave, and mixture distributions \cite{canonne2015testing,diakonikolas2015testing}.

However, many applications in computational linguistics, bioinformatics, and program analysis involve distributions over infinite combinatorial structures, particularly strings. For instance, natural language models assign probabilities to sequences of words: including feed-forward neural network models \cite{bengio2003neural}, recurrent neural network language models \cite{mikolov2010recurrent}, and hidden Markov models \cite{rabiner1989tutorial}. DNA analysis involves distributions over nucleotide sequences modeled by probabilistic sequence models \cite{durbin1998biological}, and probabilistic programs generate distributions over execution traces represented as strings \cite{wingate2011lightweight,goodman2008church}. These settings call for a theory of identity testing that can handle the infinite, structured nature of string distributions while maintaining computational tractability \cite{canonne2020survey}.

In this paper, we initiate the study of identity testing for stochastic languages. We focus on the subclasses of geometric and rational stochastic languages which admit finite representations. Our approach unifies techniques from formal language theory and distribution property testing, and extends classical results from finite-domain distributional testing to the infinite and structured domain of formal languages. In particular:

\begin{enumerate}
    \item We propose a polynomial time algorithm to check if a cost register automata over $\R^+$ with linear updates induces a probability distribution.
    \item We prove that convex combinations of geometric distributions as well as rational stochastic languages are dense in the space of all stochastic languages under the $\ell_1$ norm (\cref{thm:dirac-approx}). This allows us to approximate arbitrary distributions with finite representations.
    \item We introduce a succinct and expressive representation for stochastic languages called stochastic regular expressions that can be used for approximate sampling, analysis, and testing of stochastic languages.  
    \item We develop a truncation-based identity testing procedure that distinguishes between close and far distributions under the $\ell_1$ distance with sample complexity $\widetilde{\Theta}\left( \frac{\sqrt{n}}{\varepsilon^2} + \frac{n}{\log n} \right)$ where $n$ is the size of the truncated support(\cref{subsec:algorithm}).
\end{enumerate}

The paper is organized as follows. \cref{sec:relatedworks} reviews related work in distributional testing and weighted automata. \cref {sec:notation} establishes notation and preliminaries. \cref{sec:models} develops the computational framework of cost register automata for stochastic languages. \cref{sec:representation} introduces stochastic regular expressions and proves the universal approximation theorem. \cref{sec:testing} presents our identity testing algorithm and analyzes its correctness and sample complexity. \cref{sec:conclusion} concludes with directions for future work.
\section{Related Work}\label{sec:relatedworks}

The related work in this area can be broadly divided into two directions: distribution property testing and stochastic language. 
\subsection{Stochastic Languages}
\label{subsec:stochastic-lang}

Stochastic languages provide a formal framework for representing probability distributions over strings, with foundational models including weighted automata \cite{DrosteK19} and probabilistic grammars \cite{DenisE062}. These representations bridge formal language theory and statistical learning, enabling applications in grammar induction \cite{Clark2010}, natural language processing \cite{Cohn2009}, and biological sequence modeling \cite{Durbin1998}. 

A fundamental limitation is the undecidability of equivalence checking for stochastic languages, even for simple subclasses \cite{Almagor2022, Almagor2022-2, DrosteK19}. This has motivated two research directions: (1) developing approximation algorithms with probabilistic guarantees \cite{Tulsiani2021}, and (2) identifying decidable subclasses via restrictions like linear recurrence \cite{Weiss2021}.

Spectral learning methods have emerged as a principled alternative to EM-based estimation, offering finite-sample PAC bounds \cite{Balle2014}, handling of missing data via tensor completion \cite{Cohen2017}, and extensions to conditional distributions \cite{Yuan2022}. These advances support applications requiring interpretable models with certified convergence, such as policy learning in robotics and clinical text analysis \cite{Johnson2016}.

\subsection{Distribution Property Testing}

There has been extensive progress in the area of distribution property testing over the past three decades. The formal study of distribution testing began with Goldreich, Goldwasser, and Ron \cite{Goldreich1998PropertyTesting}. Batu et al. \cite{batu2001thesis,batu2001testing,Batu2001} initiated the systematic study of closeness and identity testing, introducing foundational techniques such as bucketing. Identity testing has a tight sample complexity of $\Theta(\sqrt{k}/\varepsilon^2)$, established through the work of Chan et al. \cite{ChanDiakonikolasValiant2013}, Acharya et al. \cite{acharya2015optimal,acharya2011competitive,acharya2015histograms}. Variants include the $\chi^2$-based tester, an $\ell_2$-based reduction \cite{diakonikolas2016new}, and a reduction from identity to uniformity \cite{chakraborty2018conditional}. The idea of instance-optimal identity testing was introduced by Valiant and Valiant \cite{valiant2017automatic}, showing that sample complexity can depend on the specific reference distribution, not just the domain size. Tolerant testing has also been studied, with recent lower bounds and separations established by Canonne \cite{Canonne2022Price,chakraborty2022gap}

Contemporary works frequently leverage conditional sampling \cite{chakraborty2018conditional,ChakrabortyEtAl2012}, which has led to dramatic improvements in sample complexity. Surveys by Diakonikolas \cite{DiakonikolasKane2016} and Canonne \cite{canonne2020survey,canonne2022topics} provide comprehensive overviews of these developments. Our work initiates the study of identity testing over $\Sigma^+$ by combining automata-based representations, truncation arguments, and conditional sampling to design testers for infinite discrete distributions.

Despite this progress, the work is limited to distribution with finite support. In contrast, property testing over infinite discrete domains like strings $\Sigma^+$ has received very little attention.

\section{Preliminaries and Notation}
\label{sec:notation}

In this section, we recall some definitions and results about probability distributions and strings. Recall that if $\Sigma$ is a finite set, then $\Sigma^*$ denotes the set of all words with $\Sigma$ as the alphabet and $\varepsilon$ as the empty word. That is, $\Sigma^*$ is a freely and finitely generated monoid with $\varepsilon$ as the unit and concatenation as the operator. 

\begin{definition}[Semiring]
A {\em semiring} is a set $K$ with two binary operations $+$ and $\cdot$ and two constant elements $0$ and $1$ such that: 
\begin{enumerate}
    \item $(K, +, 0)$ is a commutative monoid,
    \item $(K,\cdot, 1)$ is a monoid, 
    \item The distribution laws $a \cdot (b +c) = a \cdot b  + a \cdot c$ and $(a + b) \cdot c = a \cdot c +b \cdot c$ hold,
    \item $0 \cdot a = a \cdot 0 = 0$ for every $a \in K$.
\end{enumerate}
\end{definition}

\begin{definition}[Formal Power Series]
Let $\Sigma$ be a finite alphabet and $K$ a semiring. A formal power series is a mapping $r: \Sigma^* \to K$. The values $r(w)$ where $w \in \Sigma^*$ are referred to as the coefficients of the series, and $r$ is written as a formal sum $r = \sum_{w \in \Sigma^*} r(w)w$. The set of all formal power
series is denoted by $K\langle\langle \Sigma \rangle\rangle$. The support of $r$ is $\{w : r(w) \neq 0 \}$.
\end{definition}

The sum of two series $r, r' \in \pseries{K}{\Sigma}$ is defined by $r + r' = \sum_{w \in \Sigma^*} (r(w) + r'(w)) w$. The scalar multiplication of a series $r$ with a $\lambda \in K$ is defined as $\lambda r = \sum_{w \in \Sigma^*} (\lambda r(w))w$.  The Cauchy product of two series $r$ and $r'$ is defined by $r \cdot r'  =\sum_{w \in \Sigma^*, w_1w_2=w} (r(w_1) \cdot r'(w_2)) w$. We now look at a certain class of formal power series that corresponds to probability distributions over strings. 

\begin{definition}[Stochastic Language]\label{def:stochastic-language}
A stochastic language is a formal series $r \in \pseries{\R^+}{\Sigma}$ such that
$\sum_{w \in \Sigma^*} r(w) = 1$. For any stochastic language $r$ and any language $L \subseteq \Sigma^*$, the $\sum r(L) = \sum_{w \in L} r(w)$ is well defined.  In this paper, we work with stochastic languages over $\Sigma^+$ (that is, we do not consider non-empty strings), and denote the set of all stochastic languages over $\Sigma^+$ by $\Stoch(\Sigma)$.

\end{definition}

We find it useful to define $\mathcal{M}(\Sigma) \subseteq \pseries{\R^+}{\Sigma}$ such that: 
\begin{equation*}
    \mathcal{M}(\Sigma) := \left\{ \mu : \Sigma^+ \to \R^+ \;\middle|\; \sum_{w \in \Sigma^+} \mu(w) < \infty \right\}
\end{equation*}   

That is, $\mathcal{M}(\Sigma)$ contains all formal power series such that the sum of the coefficients is bounded. We can scale coefficients of $\mu$ so that they sum up to one. We call this scalar transformation $\overline{\mu} = \frac{\mu}{\mu(\Sigma^+)}$ a normalisation of $\mu$. Clearly, $\forall \mu \in \mathcal{M}(\Sigma) \backslash \{0\}$, $\overline{\mu} \in \Stoch(\Sigma)$. 
\begin{example}\label{ex:zeta-example}
Let $\Sigma = \{a\}$ be a singleton alphabet. Define the formal power series $\mu \in \pseries{\R^+}{\Sigma}$ as $\mu:w \mapsto \frac{1}{|w|^2}$. This series satisfies:
\begin{equation*}
    \sum \mu(\Sigma^+) = \sum_{w \in \Sigma^+} \mu(w) = \sum_{n=1}^{\infty} \mu(a^n) = \sum_{n=1}^{\infty} \frac{1}{n^2}  = \frac{\pi^2}{6}.
\end{equation*}
The series $r$ belongs to $\mathcal{M}(\Sigma)$ since its total weight converges to $\pi^2/6$. And therefore we have $\overline{r} : w\mapsto  \frac{6}{\pi^2}\frac{1}{|w|^2}\in \Stoch(\Sigma)$
\end{example}

\begin{example}\label{ex:poisson-power-series}
Let $\Sigma = \{a,b,c\}$ and $\lambda \in (0, 1)$. Then consider the power series $\mu \in \pseries{\mathcal M}{\Sigma}$ defined as $
\mu:w \mapsto \frac{\lambda^{|w|}}{|w|! \cdot 3^{|w|}}$.
\begin{equation*}
    \sum \mu(\Sigma^+) = \sum_{w \in \Sigma^+} \mu(w) = \sum_{w \in \Sigma^+} \frac{\lambda^{|w|}}{|w|! \cdot 3^{|w|}} = \sum_{n = 1}^\infty \Biggr(\sum_{w \in \Sigma^n} \frac{\lambda^{n}}{n! \cdot 3^{n}}\Biggl) = \sum_{n =1}^\infty \frac{\lambda^{n}}{n!} = e^{\lambda} - 1
\end{equation*}

And therefore, we have $\overline{\mu} : w \mapsto \frac{1}{e^\lambda - 1} \frac{\lambda^{|w|}}{|w|! \cdot 3^{|w|}} \in \Stoch(\Sigma)$. 
\end{example}

A central subclass of $\Stoch(\Sigma)$ consists of distributions with finite support. In particular, there is the Dirac distribution, which concentrates all probability on a single string. 

\begin{definition}[Dirac Distribution]
For a fixed string $w \in \Sigma^+$, the \emph{Dirac distribution} $\delta_w$ is a probability distribution over $\Sigma^+$ defined for all $u \in \Sigma^+$ as:
\[
\delta_w(u) = 
\begin{cases} 
1 & \text{if } u = w, \\
0 & \text{otherwise},
\end{cases}
\]
\end{definition}

Dirac distributions serve as the atomic unit of all finite distributions over $\Sigma^*$. Every finite-support distribution can be decomposed into a convex combination of Dirac distributions. 
\section{Finitary State Models for Stochastic Languages}
\label{sec:models}
In general, several rational power series satisfy \cref{def:stochastic-language}. We are interested in those that have {\em finitary} representations amenable to analysis, with sufficiently nice computational properties. To that end, we consider the cost register automata model \cite{AlurDDRY13}. 

\begin{definition}[Cost Register Automata (CRA)] A cost register automaton $\calA$ is a tuple $\calA=(\Sigma, Q, X, q_0, x_0, \delta, \rho, \mu)$, where
\begin{enumerate}
    \item $\Sigma$ is a finite alphabet,
    \item $Q$ is a finite set of states,
    \item $X$ is a finite set of registers,
    \item $q_0$ is an initial state,
    \item $x_0: X \to \mathbb{R}$ is an initial valuation of the registers,
    \item $\delta\colon Q \times \Sigma \to Q$ is the state transition function,
    \item $ \rho \colon Q \times \Sigma \times \mathbb{R}^X \to \mathbb{R}^X $ is the register update function,
    \item $\mu: Q \times \mathbb{R}^X \to \mathbb{R}$ is the finalization function.
\end{enumerate}

The {\em semantics} of a cost register automaton $\calA$ is a function $\semantics{\calA}\colon \Sigma^+ \to \mathbb{R}$ defined as follows. Given a word $w=w_1 \ldots w_n \in \Sigma^+$, the associated run of the automaton is a sequence of states and register valuations $(q_0,x_0), \ldots, (q_n, x_n)$ such that:
\begin{enumerate}
    \item $q_0$ and $x_0$ are the initial state and the initial register valuation,
    \item $q_i = \delta(q_{i-1}, w_i)$, and
    \item $x_i = \rho(q_{i-1}, w_i, x_{i-1})$ for $1 \leq i \leq n$.
\end{enumerate}
The semantics is then given by $\semantics{\calA}(w)=\mu(q_n,x_n)$.
\end{definition}

\begin{figure}
    \centering
    \begin{tikzpicture}[scale=0.2]
        \tikzstyle{every node}+=[inner sep=0pt]
        \draw [black] (20.2,-32.8) circle (3);
        \draw [black] (20.2,-32.8) circle (2.4);
        \draw [black] (10.4,-32.8) -- (17.2,-32.8);
        \fill [black] (17.2,-32.8) -- (16.4,-32.3) -- (16.4,-33.3);
        \draw [black] (22.88,-31.477) arc (144:-144:2.25);
        \draw (27.45,-31.8) node [right] {$\sigma: x \leftarrow x + 1$};
        \draw (27.45,-33.8) node [right] {$\sigma: y \leftarrow \frac{\lambda y}{|\sigma|x}$};
        \fill [black] (22.88,-34.12) -- (23.23,-35) -- (23.82,-34.19);
        \draw (15.45,-38.8) node {$\displaystyle \rho(x)=0$};
        \draw (27.45,-38.8) node {$\displaystyle \rho(y)=e^{-\lambda}$};
        \draw (37.45,-38.8) node {$\displaystyle \mu=y$};

    \end{tikzpicture}
    \caption{A Cost Register Automata for the stochastic language in \cref{ex:poisson-power-series}.}
    \label{fig:cra-poisson}
\end{figure}
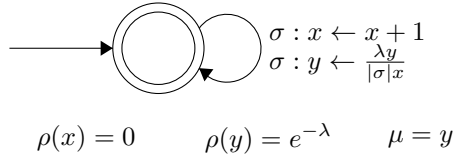

A cost register automaton induces a probability distribution if its semantics satisfies \cref{def:stochastic-language}. In general, checking whether a cost register automaton induces a probability distribution is undecidable, as the following lemma shows.

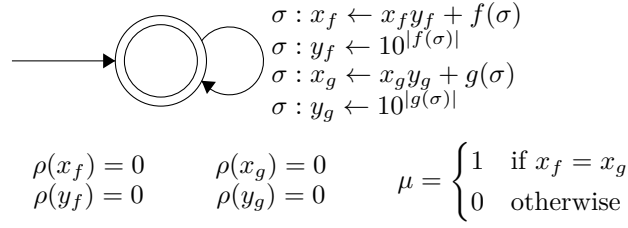
\begin{figure}
    \centering
    \begin{tikzpicture}[scale=0.2]
        \tikzstyle{every node}+=[inner sep=0pt]
        \draw [black] (20.2,-32.8) circle (3);
        \draw [black] (20.2,-32.8) circle (2.4);
        \draw [black] (10.4,-32.8) -- (17.2,-32.8);
        \fill [black] (17.2,-32.8) -- (16.4,-32.3) -- (16.4,-33.3);
        \draw [black] (22.88,-31.477) arc (144:-144:2.25);
        \draw (27.45,-29.8) node [right] {$\sigma: x_f \leftarrow x_f y_f + f(\sigma)$};
        \draw (27.45,-31.8) node [right] {$\sigma: y_f \leftarrow 10^{|f(\sigma)|}$};
        \draw (27.45,-33.8) node [right] {$\sigma: x_g \leftarrow x_g y_g + g(\sigma)$};
        \draw (27.45,-35.8) node [right] {$\sigma: y_g \leftarrow 10^{|g(\sigma)|}$};
        \fill [black] (22.88,-34.12) -- (23.23,-35) -- (23.82,-34.19);
        \draw (15.45,-39.8) node {$\displaystyle \rho(x_f)=0$};
        \draw (15.45,-41.8) node {$\displaystyle \rho(y_f)=0$}; 
        \draw (27.45,-39.8) node {$\displaystyle \rho(x_g)=0$};
        \draw (27.45,-41.8) node {$\displaystyle \rho(y_g)=0$};
        \draw (43.45,-40.8) node {$\displaystyle \mu=\begin{cases}
            1 & \text{if } x_f = x_g\\
            0 & \text{otherwise}
        \end{cases}$};
    \end{tikzpicture}
    \caption{Cost register automaton used in the proof of \cref{lem:cra-undecidability}}
    \label{fig:cra-undecidability}
\end{figure}
\begin{lemma}\label{lem:cra-undecidability}
    Given a CRA $\mathcal{A}$, it is undecidable if $\sum_{w\in \Sigma^+}\semantics{\mathcal{A}}(w) = 1$.
\end{lemma}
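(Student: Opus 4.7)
The plan is to reduce from the Post Correspondence Problem (PCP), whose complement is undecidable. Given a PCP instance $P=((u_1,v_1),\dots,(u_k,v_k))$ with $u_i,v_i\in\{1,\dots,9\}^+$ (the digit $0$ is avoided to prevent leading-digit ambiguity in the numerical encoding), I first build the CRA $\mathcal{A}_P$ over $\Sigma=\{1,\dots,k\}$ shown in Figure~\ref{fig:cra-undecidability}. On input $w=i_1\cdots i_n$, the pair $(x_f,y_f)$ accumulates the integer whose decimal expansion is the concatenation $u_{i_1}u_{i_2}\cdots u_{i_n}$: the update $x_f\leftarrow x_f\cdot y_f+f(\sigma)$ appends the digits of $f(\sigma)$ on the right, while $y_f$ carries the shift factor $10^{|f(\sigma)|}$. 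The pair $(x_g,y_g)$ plays the analogous role for $v_{i_1}\cdots v_{i_n}$. The finalization returns $1$ iff $x_f=x_g$, so $\semantics{\mathcal{A}_P}(w)=1$ exactly when $w$ is a solution to $P$ and $0$ otherwise.

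The next step is the combinatorial observation that the solution set of $P$ is either empty or infinite: if $w_0$ is a solution then every iterate $w_0^m$ is also a solution, because concatenating matching $(u,v)$-pairs preserves the resulting string equality. Consequently, $\sum_{w\in\Sigma^+}\semantics{\mathcal{A}_P}(w)\in\{0,\infty\}$, and this sum equals $0$ iff $P$ has no solution.

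To lift this to the sum-equals-one statement of the lemma, I then form the CRA $\mathcal{A}'_P$ whose semantics is $\semantics{\mathcal{A}_P}+\delta_a$ for some fixed letter $a\in\Sigma$, using the fact that CRAs are closed under pointwise addition of semantics via a standard parallel product construction. Then $\sum_w\semantics{\mathcal{A}'_P}(w)=1+\sum_w\semantics{\mathcal{A}_P}(w)\in\{1,\infty\}$, which equals $1$ precisely when $P$ has no solution. Since the complement of PCP is undecidable, so is the sum-equals-one question for CRAs.

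The main subtlety I expect is verifying that the digit concatenation is computed correctly under the parallel register-update semantics: the update of $x_f$ on symbol $\sigma$ needs the shift factor $10^{|f(\sigma)|}$ for the \emph{current} input symbol $\sigma$, whereas the parallel rule $y_f\leftarrow 10^{|f(\sigma)|}$ only sets $y_f$ after the step. The cleanest fix is to hard-code the shift factor as a $\sigma$-dependent literal inside the update expression for $x_f$, which keeps updates linear in the registers and makes the role of $y_f$ auxiliary; once this is done, the rest of the argument is routine.
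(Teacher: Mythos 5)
Your proof is the same PCP reduction as the paper's: the identical four-register automaton in \cref{fig:cra-undecidability} that accumulates the decimal encodings of $f$ and $g$ applied to the tile sequence and finalizes with the equality test $x_f=x_g$, together with the observation that the solution set of a PCP instance is empty or infinite. The two places where you depart are both corrections rather than detours, and both are worth keeping. First, the shift-factor timing: as drawn, $x_f\leftarrow x_f y_f+f(\sigma)$ multiplies by $10^{|f(\sigma')|}$ for the \emph{previous} tile $\sigma'$ (and by $0$ on the first step), so the register does not hold the decimal value of $u_{i_1}\cdots u_{i_n}$; hard-coding the $\sigma$-dependent constant $10^{|f(\sigma)|}$ into the update of $x_f$ is the right repair, and it even makes $y_f,y_g$ redundant. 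Second, the paper's ``no solution'' case relies on the empty word carrying the unit of mass (``only the empty string is mapped to $1$''), but the lemma sums over $\Sigma^+$, where the bare construction yields total weight $0$ when there is no solution and $\infty$ otherwise --- neither of which is $1$, so the reduction as stated does not decide the target question. Your addition of $\delta_a$ via a parallel product (legitimate here since general CRAs admit arbitrary update and finalization functions) shifts the two outcomes to $1$ versus $\infty$ and restores the reduction. So: same route, correctly patched at the two points where the paper's own proof is loose.
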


\begin{proof}
The proof proceeds through a reduction from the Post Correspondence Problem (PCP). Let the given instance of PCP be with $k$ tiles and alphabet $ \Sigma = \{1, \ldots 9\}$ with maps $f, g : \{1, \ldots, k \} \rightarrow \Sigma^+$. One can interpret $f$ and $g$ as mapping sequence of tiles to natural numbers expressed in their decimal form. Then consider the CRA in \cref{fig:cra-undecidability}, with one state and four registers $x_f$, $y_f$, $x_g$ and $y_g$ all initialized to $0$. At each step, $x_f$ keeps track of $f(w)$ and $y_f$ keeps track of ten to the power of the size of the last tile used. $x_f$ is updated to $x_f y_f + f(\sigma)$ on reading the letter $\sigma$. $x_g$ and $y_g$ function analogously. $\mu$ is defined as $\textsf{if } x_f = x_g \textsf{ then } 1 \textsf { else } 0$. 

Observe that if the PCP instance has a solution $w$, then the CRA maps $w$ to 1. Also, if $w$ is a solution, then so is $w^i$ for all $i$, and hence infinitely many strings are mapped to $1$. In that case, the CRA does not induce a probability distribution. Conversely, if there is no solution to the PCP instance, then only the empty string is mapped to $1$ and everything else is mapped to $0$, which is a probability distribution.
\end{proof}

Hence we consider restrictions on CRAs. It is suitable, although not sufficient, to restrict $\rho$ and $\mu$ to linear functions of register valuations. 
\subsection{Rational Stochastic Languages}

It is well known that CRAs with linear updates are equivalent to weighted automata \cite{AlurDDRY13}\cite{Benalioua2024}\cite{Czerwinski2022}. Recall the definition of a weighted automata: 

\begin{definition}[Weighted Automata]\label{def:weighted}
    Let $\Sigma$ be a finite alphabet and $(K, +, \cdot, 0, 1)$ be a semiring. A weighted automaton $\calA$ is a tuple $\calA=(\Sigma, Q, \lambda, \rho, \mu)$, where:
    \begin{enumerate}
        \item $\Sigma$ is a finite alphabet
        \item $Q$ is a finite set of states,
        \item $\lambda \colon Q \to K$ is the initial weight function,
        \item $\mu \colon Q \to K$ is the final weight function,
        \item $\rho\colon Q \times \Sigma \times Q \to K$ is the transition weight function.
    \end{enumerate}
    
    A path $\pi$ in $\calA$ is a sequence $q_0 \xrightarrow{\sigma_1} q_1 \xrightarrow{\sigma_2} \cdots \xrightarrow{\sigma_n} q_n$. Its weight is:
    \[ w(\pi) = \lambda(q_0) \cdot \left(\prod_{i=1}^n \rho(q_{i-1}, \sigma_i, q_i)\right) \cdot \mu(q_n) \]
    and its \emph{label} is $\ell(\pi) = \sigma_1 \cdots \sigma_n \in \Sigma^+$. The \emph{semantics} of $\calA$ is the function $\semantics{\calA}\colon \Sigma^+ \to K$:
    \begin{equation*}
        \semantics{\calA}(u) = \sum_{\substack{\pi \text{ path}\\ \ell(\pi)=u}} w(\pi) 
    \end{equation*}
    where the sum uses the semiring's addition operation.
\end{definition}

The stochastic languages represented by weighted automata (and CRAs with linear updates) are called rational stochastic languages. 

\begin{definition}[Rational Stochastic Language]\label{def:stochastic-rational}
The class of rational stochastic languages over alphabet $\Sigma$ and field $\R$, denoted $\Stoch_{\R}^{rat}(\Sigma)$, consists of all functions $f \colon \Sigma^+ \to \R$ such that:

\begin{enumerate}
    \item $f = \semantics{\mathcal{A}}$ for some weighted automaton $\mathcal{A}$ over the semiring $(\R, +, \cdot, 0, 1)$
    \item $f$ is a stochastic language 
\end{enumerate}

The restricted class $\Stoch_{\R^+}^{rat}(\Sigma)$ considers only automata with non-negative weights. When the semiring is clear from context, we write simply $\Stoch^{rat}(\Sigma)$.
\end{definition}

\begin{lemma}[\cite{abk20}]\label{thm:lin-cra-undecidability}
    Given a CRA $\calA$ with linear updates, it is undecidable whether $\semantics{A}(w)\geq 0$ for every word $w\in \Sigma^+$.
\end{lemma}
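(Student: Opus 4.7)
The plan is to reduce from the strict emptiness problem for probabilistic automata, which asks, given a probabilistic automaton $P$ over $\Sigma$, whether there exists some $w \in \Sigma^+$ whose acceptance probability strictly exceeds $1/2$. Paz's classical theorem establishes that this problem is undecidable, and the reduction below shows that deciding positivity of a linear CRA would solve it. This route is preferable to a direct PCP encoding (as in \cref{lem:cra-undecidability}) because the construction stays inside the class of weighted automata and avoids the nonlinear equality check $x_f = x_g$ used there.

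Given such a $P = (Q, \Sigma, \lambda, \{M_\sigma\}_{\sigma \in \Sigma}, F)$, I would construct a weighted automaton $\mathcal{B}$ over $(\R, +, \cdot, 0, 1)$ whose semantics satisfies $\semantics{\mathcal{B}}(w) = \tfrac{1}{2} - \mathrm{Pr}_P(w)$ for every $w \in \Sigma^+$. This is obtained as a linear combination of two weighted automata: a single-state automaton assigning the constant $1/2$ to every nonempty word, and a copy of $P$ viewed as a weighted automaton with the sign of its final weights flipped. The class of weighted automata is closed under sum and scalar multiplication via the standard direct-sum construction on the state space, so $\mathcal{B}$ is effectively computable from $P$. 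Invoking the equivalence between weighted automata and CRAs with linear updates noted just above \cref{def:weighted}, I convert $\mathcal{B}$ into a CRA $\mathcal{A}$ with linear register updates that realises the same function.

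By construction, $\semantics{\mathcal{A}}(w) \geq 0$ holds for every $w \in \Sigma^+$ if and only if $\mathrm{Pr}_P(w) \leq 1/2$ for every $w$, which is precisely the negation of the strict emptiness instance for $P$. A decision procedure for positivity of $\mathcal{A}$ would therefore decide strict emptiness of $P$, contradicting Paz's theorem. The main obstacle is the bookkeeping needed to ensure that the constant-$1/2$ gadget is defined only on $\Sigma^+$ (so that the two summands share the same domain), and that the direct-sum construction labels paths consistently across the summands; both are routine but easy to get subtly wrong once the empty word is excluded from the domain. A secondary point is to cite, rather than reprove, Paz's theorem as the source of undecidability, matching the attribution to \cite{abk20}.
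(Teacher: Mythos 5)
The paper does not prove \cref{thm:lin-cra-undecidability} at all --- it is stated as an imported result with the citation \cite{abk20} --- so there is no in-paper argument to compare against. Your reduction is the standard and correct one: encode $w \mapsto \tfrac{1}{2} - \mathrm{Pr}_P(w)$ as a weighted automaton via direct sum with negated final weights, pass to an equivalent linear CRA using the equivalence the paper already asserts, and conclude from the undecidability of strict emptiness for probabilistic automata (Paz); the complementation step (``nonnegative everywhere'' versus ``exists $w$ exceeding $1/2$'') and the restriction to $\Sigma^+$ are handled correctly. This is a legitimate self-contained justification of the cited lemma, modulo the fact that Paz's theorem is itself taken as a black box.
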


There is an easy way around \cref{thm:lin-cra-undecidability}: we can impose a further restriction that the initial values of the registers, the update matrices, and the final vectors should all be non-negative. Equivalently, for weighted automata we can require all the weights to be non-negative, thus ensuring that condition (1) in \cref{def:stochastic-language} is trivially satisfied. Checking condition (2) is then reduced to solving a certain system of linear equations.

\begin{remark}
    By restricting to the non-negative models in $\Stoch_{\R^+}^{rat}(\Sigma)$, we give up a bit of expressivity. Indeed, $\Stoch_{\R^+}^{rat}(\Sigma) \subsetneq \Stoch_{\R}^{rat}(\Sigma) \subsetneq \Stoch(\Sigma)$ \cite{DenisE06}.
\end{remark}

In \cref{subsec:total-weight}, we show that $\sum_{w \in \Sigma^+}\semantics{\calA}$ can be computed if $\calA$ is a CRA with linear updates. A related, seemingly more expressive model would be one where we allow the update and finalisation functions to be affine. However, the expressivity of the affine model is the same as the expressivity of the linear model (refer to \cref{thrm:app:affine}). Unfortunately, it turns out, even with only linear updates, the question of checking if a CRA induces a probability distribution remains undecidable.
\subsection{Computing the Total Weight via Linear Systems}
\label{subsec:total-weight}

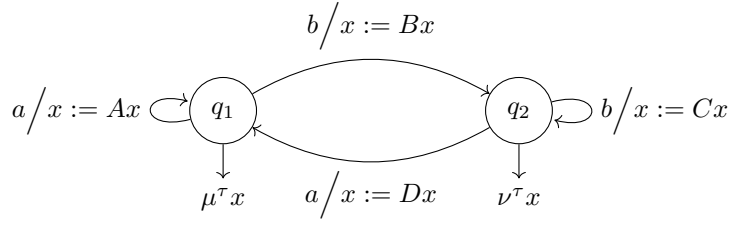
\begin{figure}
    \centering
    \begin{tikzpicture}[->,auto,node distance=3cm,accepting/.style=accepting by arrow]
        \node[state,accepting,accepting text=$\mu^\tau x$,accepting where=below] (q_1) {$q_1$};
        \node[state,accepting,accepting text=$\nu^\tau x$,accepting where=below] (q_2) [right=of q_1] {$q_2$};
        \path (q_1) edge [loop left]
                    node {$a \Big/ x := Ax$} ()
                    edge [bend left]
                    node {$b \Big/ x := Bx$} (q_2)
              (q_2) edge [loop right]
                    node {$b \Big/ x := Cx$} ()
                    edge [bend left]
                    node {$a \Big/ x := Dx$} (q_1);
    \end{tikzpicture}
    \caption{A cost register automaton with two states over an alphabet $\Sigma=\{a, b\}$}
    \label{fig:abcd}
\end{figure}
\begin{figure}[t]
    \centering
    \begin{minipage}[c]{0.4\textwidth}
        \centering
        \begin{tikzpicture}[->,auto,accepting/.style=accepting by arrow,
            scale=0.9, transform shape]
            \node[state,accepting,accepting where=right,%
                  accepting text=$\mu^\tau x$,%
                  initial,%
                  initial text=$x_{\mathsf{init}}$] (q) {$q$};
            \path (q) edge [loop above]
                        node [align=center] {$a \big/ x :=  Ax$ \\ $b \big/ x := Bx$} (q);
        \end{tikzpicture}
    \end{minipage}%
    \begin{minipage}[c]{0.3\textwidth}
        \centering
        \[
            \begin{array}{c}
                A = \dfrac{1}{k} \begin{bmatrix} 1 & 1 \\ 0 & 1 \end{bmatrix} \\
                B = \dfrac{1}{k} \begin{bmatrix} 1 & 0 \\ 0 & 1 \end{bmatrix}  
            \end{array}
        \]
    \end{minipage}
    \begin{minipage}[c]{0.3\textwidth}
        \centering
        \[
            \begin{array}{c}
                x_{\mathsf{init}} = \begin{bmatrix} 0 \\ 1 \end{bmatrix} \\
                \mu = \begin{bmatrix} 1 \\ 0 \end{bmatrix}
            \end{array}
        \]
    \end{minipage}
    \caption{Cost register automaton from \cref{ex:count-as}.}
    \label{fig:count-as}
\end{figure}
We now show that it is possible to determine if a CRA over $\R^+$ with linear updates induces a stochastic language. We first demonstrate this with an example of the CRA in \cref{fig:abcd}. 

Let us define the following quantities for $i=1,2$ and an arbitrary vector $x$:
\[ S_i(x) := \sum_{w \in \Sigma^+} \semantics{\calA_{q_i, x}}(w) \]
Clearly, $S_i(x)$ is linear in $x$. Moreover, by expanding the semantics of the automaton,
we get
\begin{align*}
    S_1(x) & = \mu^\tau x + S_1(Ax) + S_2(Bx) \\
    S_2(x) & = \nu^\tau x + S_1(Dx) + S_2(Cx)
\end{align*}

From \cref{fig:abcd}, observe that $x$, $\mu$, and $\nu$ are $2\times 1$ vectors and $A$, $B$, $C$, and $D$ are $2\times 2$ matrices. Let $e_1$ and $e_2$ be the standard basis vectors of $\mathbb{R}^2$. By specialising the above system for $x=e_1$ and $x=e_2$ and treating $s_{ij}:=S_i(e_j)$ for $i,j\in\{1,2\}$ as unknowns, we get the following system of
equations
\begin{equation} \label{eq:system}
\begin{bmatrix}
    A^\tau-I & B^\tau \\
    D^\tau & C^\tau-I
\end{bmatrix}
\begin{bmatrix}
    s_{11} \\ s_{12} \\ s_{21} \\ s_{22}
\end{bmatrix}
+
\begin{bmatrix}
    \mu \\ \nu
\end{bmatrix}
=0 \, .
\end{equation}
If the sums $S_i(x)$ converge, then \cref{eq:system} has a solution and $S_i(x)=s_{i1}x_1+s_{i2}x_2$. However, \cref{eq:system} can have a solution even
when the sums do not converge. 

Next, we analyze the example in \cref{fig:abcd} for some concrete matrices $A, B, C, D$.

\begin{example}\label{ex:count-as}
    Let $A=D=\frac{1}{k}\left[\begin{smallmatrix} 1 & 1 \\ 0 & 1 \end{smallmatrix}\right]$, $B=C=\frac{1}{k}\left[\begin{smallmatrix} 1 & 0 \\ 0 & 1 \end{smallmatrix}\right]$ for $k>0$, $\mu = \nu = \left[ \begin{smallmatrix} 1 \\ 0 \end{smallmatrix} \right]$, and let the initial register valuation be $\left[ \begin{smallmatrix} 0 \\ 1 \end{smallmatrix} \right]$. It is not difficult to see that in this case the two states $q_1$ and $q_2$ are redundant, and we can simplify the automaton to the one in \cref{fig:count-as}.
    
    First note that $\semantics{\calA}(a^m b^{n-m})=m/k^n$ for $0\leq m \leq n$. Since the matrices $A$ and $B$ commute, we also have $\semantics{\calA}(w)=m/k^n$ for every word $w$ of length $n$ with exactly $m$ symbols $a$. Therefore,
    \[ \sum_{w \in \Sigma^+} \semantics{\calA}(w)
        = \sum_{n \geq 0} \sum_{0 \leq m \leq n} {n \choose m} \frac{m}{k^n}
        = 2 \sum_{n \geq 0} \frac{n 2^n}{k^n} \, .\]
    This sum converges to $k/(k-2)^2$ if and only if $k > 2$. Now let us denote by $s_1$ and $s_2$ the sums of the automaton with unit vectors $e_1$ and $e_2=x_{\mathsf{init}}$ as initial register valuations. Then we have the following system of equations:
    \begin{align*}
        s_1 & = 1 + \frac{1}{k} s_1 + \frac{1}{k} s_1 \\
        s_2 & = \frac{1}{k} s_1 + \frac{1}{k} s_2 + \frac{1}{k} s_2
    \end{align*}
    The solution exists whenever $k\neq 2$, and it is given by $s_1=k/(k-2)$ and $s_2 = k/(k-2)^2$. Note that $s_2$ is always positive. However, if $0<k<2$, then $s_1$ is negative, coinciding with the divergence of the sum $\sum_{w \in \Sigma^+} \semantics{\calA}(w)$, which is supposed to be equal to $s_2$. Note that by setting $k=4$ we get $s_2=1$ (and $s_1=2$), which means that the automaton induces a probability distribution.
\end{example}

Let us now turn to the general case. Let $\mathcal{A} = (Q, q_0,  \{x_1, \dots, x_d\}, \delta, \mu)$ be a CRA with linear updates. 
Let $x_{\mathsf{init}} \in \mathbb{R}^d$ denote the initial register valuation. For each state $q \in Q$, define the function $S_q(x) := \sum_{w \in \Sigma^+} \semantics{\mathcal{A}_{q,x}}(w)$,
where $\mathcal{A}_{q,x}$ denotes the CRA $\mathcal{A}$ with initial state $q$ and initial register valuation $x$. As $\mathcal{A}$ is linear, each $S_q(x)$ is itself a linear function of $x$. That is, for each $q$, there exists a row vector $s_q^\top \in \mathbb{R}^{1 \times d}$ such that $S_q(x) = s_q^\top x$. For each $q \in Q$, we have:

\begin{equation}
\label{eq:total-weight}
S_q(x) = \mu_q^\top x + \sum_{\sigma \in \Sigma} \sum_{\delta(q, \sigma) = (q', A_{q,\sigma})} S_{q'}(A_{q,\sigma} x)    
\end{equation}

By substituting $S_{q'}(A_{q,\sigma} x) = s_{q'}^\top A_{q,\sigma} x$, using linearity, and equating coefficients in the above equation, we get:

\begin{theorem}\label{thm:linear-system-convergence}
Let $\mathcal{A}$ be a CRA with linear updates and finalization, and let $x_{\mathsf{init}} \in \mathbb{R}^d$ be the initial register valuation. 
Let $\{s_q \in \mathbb{R}^d \mid q \in Q\}$ be the unique solution to the system
\[
s_q = \mu_q + \sum_{\sigma \in \Sigma} A_{q,\sigma}^\top s_{q'} \quad \text{where } \delta(q, \sigma) = (q', A_{q,\sigma}).
\]
Then the total weight assigned by $\mathcal{A}$ is given by
\[
\sum_{w \in \Sigma^+} \semantics{\mathcal{A}}(w) = s_{q_0}^\top x_{\mathsf{init}}.
\]
\end{theorem}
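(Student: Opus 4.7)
The approach formalizes the derivation already sketched for the two-state example preceding the theorem. Under the standing assumption that the relevant series converge absolutely, define for each state $q\in Q$ the map $S_q:\mathbb{R}^d\to\mathbb{R}$ as the total weight of the automaton started from $(q,x)$, so that the target identity is exactly $S_{q_0}(x_{\mathsf{init}}) = s_{q_0}^\top x_{\mathsf{init}}$ (up to the empty-word bookkeeping already absorbed in (eq:total-weight)). The plan proceeds in three steps: establish linearity of each $S_q$ in $x$, derive the key recursion by splitting on the first letter read, and convert the recursion into the claimed linear system by equating vector coefficients.

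For the linearity step, an easy induction on $|w|$ shows that the register valuation reached after reading $w$ from state $q$ with initial valuation $x$ is of the form $M_{q,w} x$ for a matrix $M_{q,w}$ independent of $x$; since finalization is linear, $\semantics{\mathcal{A}_{q,x}}(w)$ is linear in $x$ for every fixed $w$. Absolute convergence of $S_q$ then lets us interchange linear combinations with the infinite sum, producing the representation $S_q(x) = s_q^\top x$ for some $s_q \in \mathbb{R}^d$.

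For the recursion, I would write every nonempty word uniquely as $\sigma w'$ with $\sigma\in\Sigma$ and $w'\in\Sigma^*$; one step of the automaton from $(q,x)$ transitions to $(q', A_{q,\sigma} x)$ where $q' = \delta(q,\sigma)$, so the contribution of all words beginning with $\sigma$ equals $S_{q'}(A_{q,\sigma} x)$. Together with the state-$q$ finalization term this yields (eq:total-weight). Substituting $S_{q'}(A_{q,\sigma} x) = (A_{q,\sigma}^\top s_{q'})^\top x$ and equating coefficients of $x$ on both sides (valid because the identity holds for every $x$, hence on a basis) produces the claimed linear system $s_q = \mu_q + \sum_{\sigma\in\Sigma} A_{q,\sigma}^\top s_{\delta(q,\sigma)}$. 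Specializing to $q = q_0$ and $x = x_{\mathsf{init}}$ finishes the proof.

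The main obstacle is the interplay between uniqueness of the linear system and convergence of the underlying series. \cref{ex:count-as} already shows that the system can be uniquely solvable in regimes where the series diverges (the case $0<k<2$ there), so an unconditional statement would be false; I would therefore make absolute convergence an explicit standing hypothesis and observe that under this hypothesis the coefficient vectors of the linear maps $S_q$ are forced by the recursion, so any solution of the linear system must agree with the true total weight. A fuller treatment of when the linear system is uniquely solvable would require a spectral-radius or Neumann-series argument on the block operator underlying the system, but this lies outside the present theorem's scope.
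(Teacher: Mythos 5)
Your proposal is correct and follows essentially the same route as the paper, which obtains the theorem directly from the recursion in \cref{eq:total-weight} by observing that each $S_q$ is linear in $x$, substituting $S_{q'}(A_{q,\sigma}x) = s_{q'}^\top A_{q,\sigma}x$, and equating coefficients. Your explicit insistence on an absolute-convergence hypothesis (since the linear system can be uniquely solvable even when the series diverges, as in the $0<k<2$ regime of \cref{ex:count-as}) is well taken---the paper makes the same observation in the discussion preceding the theorem but leaves the hypothesis implicit in the statement itself.
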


If the entries of each $s_q$ are finite and non-negative, then the sum converges and the CRA defines a proper distribution (possibly after normalization). If $s_{q_0}^\top x_{\mathsf{init}} = 1$, then $\mathcal{A}$ defines a stochastic language.

Concretely, we can claim:

\begin{theorem}
Given a CRA $\calA$ with linear updates and a regular language $R \subseteq \Sigma^+$, the quantity $\sum_{w \in R} \semantics{\calA}(w)$ can be computed in $\mathcal{O}\left(|\mathcal{A}|^3\cdot|R|^3\right)$, where $|\mathcal{A}|$ is the size of the CRA and $|R|$ is the size of a deterministic finite state automata (DFA) accepting  $R$.
\end{theorem}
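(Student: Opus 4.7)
The plan is to reduce the restricted-sum problem to the unrestricted case of \cref{thm:linear-system-convergence} via a standard product construction with a DFA for $R$, and then to bound the cost of solving the resulting linear system by Gaussian elimination.

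First, I would fix a DFA $D=(Q_D, q_D^0, \delta_D, F_D)$ with $|Q_D|=|R|$ recognising $R$, and build a product CRA $\mathcal{B}$ whose state set is $Q_{\calA}\times Q_D$, whose register set is that of $\calA$, whose initial state is $(q_0, q_D^0)$ with initial valuation $x_{\mathsf{init}}$, and whose transitions simultaneously apply $\calA$'s linear register updates and advance $D$ according to the input letter. The finalisation at a product state $(q,p)$ is set to $\mu_q^\tau x$ when $p \in F_D$, and to the zero functional otherwise. By construction $\semantics{\mathcal{B}}(w)=\semantics{\calA}(w)$ for $w\in R$ and $\semantics{\mathcal{B}}(w)=0$ otherwise, so the total semantic weight of $\mathcal{B}$ over $\Sigma^+$ equals exactly $\sum_{w\in R}\semantics{\calA}(w)$.

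Next, I would apply \cref{thm:linear-system-convergence} to $\mathcal{B}$. The resulting system has one vector unknown $s_{(q,p)}\in\R^d$ per product state, giving $N := |Q_\calA|\cdot|R|\cdot d$ scalar unknowns, where $d$ is the number of registers of $\calA$. Assembling the coefficient matrix from the update matrices of $\mathcal{B}$ is linear in its description size, and Gaussian elimination on the resulting $N\times N$ system takes $\mathcal{O}(N^3)$ field operations. Since $|\calA|$ must account for the $|Q_\calA|\cdot|\Sigma|\cdot d^2$ entries of the update matrices, we have $|Q_\calA|\cdot d\leq|\calA|$, and hence $N^3\leq|\calA|^3\cdot|R|^3$ as claimed; the desired sum is then read off as $s_{(q_0,q_D^0)}^\tau x_{\mathsf{init}}$.

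The main obstacle I anticipate is verifying that the product construction preserves the convergence hypothesis implicit in \cref{thm:linear-system-convergence}: as \cref{ex:count-as} with $0<k<2$ illustrates, the linear system can admit a finite algebraic solution even when the associated infinite sum diverges, and such spurious solutions must be excluded. In the non-negative setting $\Stoch_{\R^+}^{rat}(\Sigma)$, $\semantics{\mathcal{B}}(w)$ is term-wise dominated by $\semantics{\calA}(w)$, so convergence of $\sum_{w\in\Sigma^+}\semantics{\calA}(w)$ transfers to $\mathcal{B}$ and the identification of the linear-system solution with the true sum is inherited directly from the proof of \cref{thm:linear-system-convergence}; outside this setting the statement should be read as conditional on convergence, matching the implicit scope of \cref{subsec:total-weight}.
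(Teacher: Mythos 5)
Your proposal follows essentially the same route as the paper's own proof sketch: a synchronized product of the CRA with a DFA for $R$, followed by an application of \cref{thm:linear-system-convergence} to the product and Gaussian elimination on the resulting linear system of size $\mathcal{O}(|\mathcal{A}|\cdot|R|)$. Your additional care about the convergence caveat (spurious algebraic solutions when the sum diverges, and the term-wise domination argument in the non-negative case) is a welcome refinement that the paper's sketch leaves implicit, but it does not change the underlying argument.
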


\begin{proof}[Sketch]
Let $\mathcal{B} = (Q_R, \Sigma, \delta_R, q_0^R, F_R)$ be a DFA accepting $R$. We construct the synchronized product of the CRA $\mathcal{A}$ and the DFA $\mathcal{B}$, denoted $\mathcal{A} \times \mathcal{B}$. This product defines the same register semantics as $\mathcal{A}$, but restricted to strings in $R$. That is,
\[
\semantics{\mathcal{A} \times \mathcal{B}}(w) =
\begin{cases}
    \semantics{\mathcal{A}}(w) & \text{if } w \in R, \\
    0 & \text{otherwise}.
\end{cases}
\]

It follows that $\sum_{w \in R} \semantics{\mathcal{A}}(w) =
\sum_{w \in \Sigma^+} \semantics{\mathcal{A} \times \mathcal{B}}(w)$. As shown in Theorem~\ref{thm:linear-system-convergence}, for a CRA with linear updates, the total weight over $\Sigma^+$ can be computed by solving a linear system whose size is proportional to the product of the number of states and the number of registers. In this case, the size of the product automaton is $\mathcal{O}(|\mathcal{A}||R|)$. Solving a system of $n$ linear equations can be done in time $\mathcal{O}(n^3)$ using standard Gaussian elimination. Thus, the overall runtime is $\mathcal{O}(|\mathcal{A}|^3 \cdot |R|^3)$ after pre-processing and symbolic simplification.
\end{proof}
\section{Representations and Approximations for Stochastic Languages}
\label{sec:representation}
While cost register automata (CRAs) can represent a broad class of distributions over strings, including all rational stochastic languages, reasoning about them directly often involves explicit manipulation of matrix-valued register updates. This low-level representation can obscure the underlying structure of the distribution and hinder interpretability.

A key observation is that for large enough strings, the probability assigned to a by a rational stochastic distribution decays exponentially. This arises from the semantics of CRAs: the repeated application of linear updates along transitions leads to a geometric decay in weight. To make this structure more explicit, we first introduce a new algebraic fragment of stochastic languages, which we term geometric distributions. We show that convex combinations of geometric distributions are dense in the space of stochastic languages, making them a natural and expressive basis for approximation and representation.

Building on this insight, we further develop a syntax of stochastic regular expressions (SRE), which is an analogue of the classical regular expressions with probabilistic constructs such as convex combination, Cauchy product, and a discounted Kleene star operator that ensures convergence. We prove that these expressions are equally expressive as rational stochastic languages. This provides a compositional and interpretable representation framework that aligns closely with the semantics of CRAs and weighted automata, while enabling algebraic reasoning about probabilistic string distributions.

\subsection{Geometric Distributions}

Motivated by geometric distributions over $\N$ (which models the waiting time for the first success in Bernoulli trials), we define \emph{ geometric distributions} over $\Sigma^+$, a natural analogue for formal languages. These distributions form a parametric subclass of $\Stoch_{\R^+}^{rat}(\Sigma)$ with particularly simple finite state representations.

\begin{definition}[Geometric Distribution]\label{def:kleene-dist}
Given $w \in \Sigma^+$ and $\alpha \in (0, 1)$, a geometric distribution over $\Sigma^+$ is the function $P_{w}^{\alpha} : \Sigma^+ \to [0,1]$ defined by:
\[
P_{w}^{\alpha}(w') = \begin{cases}
\alpha(1 - \alpha)^{k-1}, & \text{if } \exists k \in \mathbb{N}^+ \text{ such that } w' = w^k, \\
0, & \text{otherwise}.
\end{cases}
\]

We denote the class of all such distributions by $\mathcal{G}(\Sigma) \subset \Stoch_{\R^+}^{rat}(\Sigma)$.
\end{definition}

These distributions succinctly capture the probability of observing $k$ consecutive occurrences of a pattern $w$, which decays exponentially with $k$. Each $P_w^\alpha$ admits a compact representation as a CRA tracking only the count $k$. The normalisation condition $\sum_{k=1}^\infty P_w^\alpha(w^k) = 1$ follows by geometric series convergence. Convex combinations of those geometric distributions provide a succinct and expressive framework for approximating arbitrary stochastic languages. First, we prove that geometric distributions can approximate the simplest distribution:

\begin{lemma}[Geometric Distributions Can Approximate Dirac Distributions]
\label{thm:dirac-approx}
For all Dirac distributions $\delta_w$, given an $\varepsilon > 0$, there exists a geometric distribution $p(w, \varepsilon) \in \mathcal{G}$ such that $\|\delta_w - P_w^\alpha \|_1 < \varepsilon$. 
\end{lemma}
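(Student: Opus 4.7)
The plan is a direct computation of the $\ell_1$ distance, exploiting the fact that both distributions are supported on powers of $w$, so the comparison reduces to a single one-dimensional geometric series.

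First I would observe that $\delta_w$ and $P_w^\alpha$ both vanish on every string not of the form $w^k$ for some $k \geq 1$, so the sum
\[
\|\delta_w - P_w^\alpha\|_1 = \sum_{u \in \Sigma^+} |\delta_w(u) - P_w^\alpha(u)|
\]
collapses to a sum over $k \geq 1$ of the terms $|\delta_w(w^k) - P_w^\alpha(w^k)|$. The $k=1$ term is $|1 - \alpha| = 1 - \alpha$ (for $\alpha \in (0,1)$), and the $k \geq 2$ terms are $\alpha(1-\alpha)^{k-1}$ since $\delta_w$ is zero there.

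Next I would sum the tail geometrically:
\[
\sum_{k=2}^\infty \alpha(1-\alpha)^{k-1} = \alpha \cdot \frac{1-\alpha}{1-(1-\alpha)} = 1 - \alpha.
\]
Combining the two contributions gives $\|\delta_w - P_w^\alpha\|_1 = 2(1-\alpha)$. Thus for any target $\varepsilon > 0$, choosing any $\alpha \in (1-\varepsilon/2,\,1)$, for instance $\alpha := 1 - \varepsilon/3$, yields a geometric distribution $p(w,\varepsilon) := P_w^\alpha \in \mathcal{G}(\Sigma)$ satisfying the required bound.

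There is no real obstacle here; the only subtlety is ensuring that one correctly identifies the support overlap, since a naive bound that sums over all of $\Sigma^+$ would over-count. The cleanest formulation is to note that the symmetric difference of the supports of $\delta_w$ and $P_w^\alpha$, viewed as a subset of $\Sigma^+$, lies entirely inside $\{w^k : k \geq 1\}$, reducing the identity testing metric to an explicit closed-form expression in $\alpha$ alone. The result then follows by taking $\alpha \to 1^-$.
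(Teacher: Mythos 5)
Your proposal follows the same route as the paper's proof: restrict the $\ell_1$ sum to the common support $\{w^k : k \geq 1\}$, split off the $k=1$ term, and evaluate the remaining geometric series in closed form. The difference is that your arithmetic is the correct version of this computation. With $P_w^\alpha(w^k) = \alpha(1-\alpha)^{k-1}$, the $k=1$ discrepancy is $1-\alpha$ and the tail sums to $\sum_{k\geq 2}\alpha(1-\alpha)^{k-1} = 1-\alpha$, giving $\|\delta_w - P_w^\alpha\|_1 = 2(1-\alpha)$, so one must take $\alpha$ close to $1$, e.g.\ any $\alpha \in (1-\varepsilon/2,\,1)$, exactly as you do. The paper's own proof instead records the $k>1$ terms as $\alpha(1-\alpha)^k$ (off by one in the exponent), records the $k=1$ term as $\alpha$ rather than $1-\alpha$, arrives at the closed form $\frac{\alpha}{1-\alpha}$, and then chooses $\alpha \in \left(\frac{\varepsilon}{1+\varepsilon}, 1\right)$ --- a range on which the true distance $2(1-\alpha)$ need not be small at all (e.g.\ $\alpha = 1/2$ gives distance $1$), and on which $\frac{\alpha}{1-\alpha}$ is in fact \emph{greater} than $\varepsilon$, not less. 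So your proof is correct and the paper's, as written, is not; the lemma itself is of course true and your argument establishes it. The only cosmetic nit is that your sample choice $\alpha = 1-\varepsilon/3$ presumes $\varepsilon < 3$; since $\ell_1$ distances between probability distributions never exceed $2$, this costs nothing, but stating the conclusion as ``any $\alpha \in (\max\{0, 1-\varepsilon/2\},\,1)$'' avoids the issue entirely.
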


\begin{proof}

Consider $p = P_w^\alpha$ for any $\alpha \in \left(\frac{\varepsilon}{1+\varepsilon}, 1 \right)$. Then $\left|\delta_w(w) - P_w^\alpha(w)\right| = (1-\alpha)$. For any $u = w^k$ with $k > 1$, $\left|\delta_w(w^k) - P_w^\alpha(w^k)\right| = \alpha(1 - \alpha)^k$. For any other $u$, $\left|\delta_w(u) - P_w^\alpha(u)\right| = 0$. Therefore we have:

\begin{equation*}
    \|\delta_w - P_w^\alpha \|_1 = \alpha + \sum_{k = 2}^\infty \alpha(1-\alpha)^k = \frac{\alpha}{(1 - \alpha)} < \varepsilon 
\end{equation*}

\end{proof}

Recall that every stochastic language over $\Sigma^+$ with a finite support is a convex combinations of Dirac distributions. Therefore, convex combinations of geometric distributions can approximate all stochastic languages with finite support with respect to the $\ell_1$ norm. Using this property, we prove the following: 

\begin{theorem}[Universal Approximation]
\label{thm:universal-sup-approx} Given a stochastic language $r \in \Stoch(\Sigma)$, and $\varepsilon > 0$, there exists $p_1,  \ldots, p_k \in \mathcal{G}$ and $\lambda_1, \ldots, \lambda_k \in [0, 1]$ such that 
$\left\|r - \sum_{i = 1}^k \lambda_ip_i \right\|_1 < \varepsilon$.
\end{theorem}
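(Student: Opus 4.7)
The plan is to approximate $r$ by first truncating to a finite support, replacing each atom on the truncated support by a geometric distribution via \cref{thm:dirac-approx}, and absorbing the leftover tail mass into a single extra geometric distribution. Concretely, since $r \in \Stoch(\Sigma)$ the series $\sum_{w \in \Sigma^+} r(w) = 1$ converges, so for any $\varepsilon > 0$ there is a finite $F \subset \Sigma^+$ with $r(\Sigma^+ \setminus F) < \varepsilon/4$. Enumerate $F = \{w_1, \ldots, w_m\}$, set $\lambda_i = r(w_i)$ for $i \le m$, and let $\lambda_{m+1} = 1 - \sum_{i=1}^{m} \lambda_i = r(\Sigma^+ \setminus F)$.

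Next I would invoke \cref{thm:dirac-approx} once for each $w_i \in F$ to obtain a geometric distribution $p_i = P_{w_i}^{\alpha_i} \in \mathcal{G}$ with $\|\delta_{w_i} - p_i\|_1 < \varepsilon/2$. For the residual mass $\lambda_{m+1}$, I would pick any fixed $p_{m+1} \in \mathcal{G}$ (say $P_{a}^{1/2}$ for some $a \in \Sigma$); its particular form is irrelevant because its entire contribution is dominated by the small scalar $\lambda_{m+1}$. This produces a convex combination $q = \sum_{i=1}^{m+1} \lambda_i p_i$ of geometric distributions (with $k = m+1$).

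The error estimate then proceeds by splitting over $F$ and its complement and applying the triangle inequality:
\[
\|r - q\|_1 \;\le\; \sum_{i=1}^{m} \lambda_i \|\delta_{w_i} - p_i\|_1 \;+\; \bigl\| r|_{\Sigma^+ \setminus F} - \lambda_{m+1} p_{m+1} \bigr\|_1.
\]
The first sum is bounded by $(\varepsilon/2) \sum_{i=1}^{m} \lambda_i \le \varepsilon/2$ using the per-atom approximation guarantee. The second term is bounded by $\|r|_{\Sigma^+ \setminus F}\|_1 + \lambda_{m+1} \|p_{m+1}\|_1 = 2 \lambda_{m+1} < \varepsilon/2$, since each factor has $\ell_1$ mass $\lambda_{m+1} < \varepsilon/4$. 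Adding the two contributions yields $\|r - q\|_1 < \varepsilon$.

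The only genuinely delicate step is the initial truncation, which relies solely on convergence of $\sum_w r(w) = 1$; no structural assumption on $r$ (e.g.\ rationality or regularity) is needed. The Dirac-to-geometric step reduces to \cref{thm:dirac-approx}, and the bookkeeping is just convex combination plus the triangle inequality. The main obstacle is conceptual rather than technical: one must resist trying to directly approximate arbitrary infinite-support $r$ by a geometric mixture in one shot, and instead accept that the $\varepsilon$-budget should be split between a truncation error and a pointwise-approximation error.
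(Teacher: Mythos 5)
Your proof is correct and follows essentially the same route as the paper: truncate $r$ to a finite set carrying all but $\varepsilon/4$ of the mass, approximate each surviving atom by a geometric distribution via \cref{thm:dirac-approx}, and conclude with the triangle inequality. The only cosmetic difference is that the paper normalises the truncated series $r|_S$ and then invokes the finite-support approximation step implicitly, whereas you keep the unnormalised weights and absorb the tail mass into one extra dummy geometric component --- your version actually spells out the bookkeeping the paper leaves to the reader.
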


\begin{proof}
Since $r \in \Stoch(\Sigma)$, there exists a finite set $S \subset \Sigma^+$ such that $\sum_{w \in S} r(w) > 1 - \frac{\varepsilon}{4}$. Then consider the power series $r' = r |_S$, formed by restricting $r$ to the finite support $S$. Then, for the normalized power series $\overline{r'}$ is a stochastic rational language such that:

\begin{align*}
    \left\|r - \overline{r'} \right\|_1 
    &= \left(\sum_{w \in S} \left|r(w) - \overline{r'}(w)\right|\right) 
       + \sum_{w \in \Sigma^+ \backslash S} r(w) \\
    &= \left( \frac{1}{r(S)} - 1 \right) \left( \sum_{w \in S} r(w) \right) 
       + \sum_{w \in \Sigma^+ \backslash S} r(w) \\
    &= \left( \frac{1}{r(S)} - 1 \right) r(S) + (1 - r(S)) = 2(1 - r(S)) < \frac{\varepsilon}{2}
\end{align*}

Now consider $p$ where $p_1, \ldots, p_k \in \mathcal{G}$, $\lambda_1, \ldots, \lambda_k \in [0, 1]$, and $p = \sum_{i = 1}^k \lambda_i p_i$ such that $\left\|\overline{r'} -p\right\|_1 < \frac{\varepsilon}{2}$. Then $
\left\|r -p\right\|_1 < \left\|r -\overline{r'}\right\|_1 + \left\|\overline{r'} -p\right\|_1 < \varepsilon$. 
\end{proof}

That is, convex combinations of geometric distributions are dense in $\Stoch(\Sigma)$ with respect to the $\ell_1$ norm. This universal approximation also implies the density of convex combinations of geometric distributions with respect to the $\ell_\infty$ norm. 

However, this universal approximation result does not extend to other measures of the difference between two probability distributions, such as the Kullback-Leibler (KL) divergence. The key issue is that KL divergence is sensitive to the tail behaviour. Since every geometric distribution has an exponentially decaying tail, it places exponentially small mass on longer strings. As a result, no convex combination of geometric distributions can approximate rational stochastic languages with \emph{fat tails} in KL divergence. However, we can show that geometric distributions can approximate all sub-exponential distributions with respect to KL divergence (ref: \cref{thrm:app:kl}).

\subsection{Stochastic Regular Expressions}

Motivated by the regular expressions and geometric decay, we introduce a syntactic and compositional framework for representing approximations of stochastic languages. The central construct in this framework is a probabilistic analog of the Kleene star, which we call the \emph{discounted Kleene star}. This operator captures the notion of repeated concatenation with geometric decay, ensuring convergence of the resulting distribution.

\begin{definition}[Discounted Kleene Star]\label{def:discounted-kleene}
Let \( r \in \Stoch(\Sigma) \) be a stochastic language over non-empty strings, and let \( \alpha \in (0,1) \) be a discount factor. The discounted Kleene star of \( r \), denoted \( r^*_\alpha : \Sigma^+ \to [0,1] \), is defined as:
\[
r^*_\alpha(w) = \sum_{k=1}^\infty \sum_{\substack{w_1,\ldots,w_k \in \Sigma^+ \\ w = w_1 \cdots w_k}} \alpha (1-\alpha)^{k-1} \prod_{i=1}^k r(w_i).
\]
\end{definition}

Intuitively, \( r^*_\alpha(w) \) defines a distribution over strings obtained by concatenating \( k \) non-empty substrings, each independently drawn from \( r \), with the total number of substrings following a shifted geometric distribution with parameter \( \alpha \). The operator ensures convergence for any \( r \in \Stoch(\Sigma) \), and we show in \cref{thrm:app:sre} that \( r^*_\alpha \in \Stoch(\Sigma) \) for all such \( r \in \Stoch(\Sigma) \).

This operator provides a natural way to express geometric distributions over strings. For example, the Dirac geometric distribution centered at a string \( w \in \Sigma^+ \), denoted \( P^\alpha_w \), can be expressed as $P^\alpha_w = (\delta_w)^*_\alpha$,
where \( \delta_w \) is the Dirac distribution supported on \( w \). 

Using this operator, along with standard probabilistic analogs of union and concatenation, we define a grammar for constructing \emph{stochastic regular expressions} (SREs). 

\begin{definition}[Syntax of Stochastic Regular Expressions (SRE)] 
The syntax of an SRE over an alphabet \( \Sigma \) is defined by the following grammar:
\[
r ::= \delta_\sigma \mid \alpha r_1 + (1-\alpha) r_2 \mid r_1 \cdot r_2 \mid r^*_\alpha,
\]
where \( \sigma \in \Sigma \), \( r_1, r_2 \) are SREs, and \( \alpha \in (0,1) \) is a real-valued weighting parameter. We write \( r_1 r_2 \) for \( r_1 \cdot r_2 \), omitting the dot when convenient.
\end{definition}

The semantics $\semantics{r}: \Sigma^+ \to [0, 1]$ of an SRE $r$ is defined recursively, combining four key operations: atomic Dirac distributions $\delta_\sigma$ for symbols $\sigma \in \Sigma$, 
convex combinations for probabilistic choice, Cauchy product for concatenation, and the discounted Kleene star for repeated composition. These constructs parallel classical regular expressions, but extend them to the probabilistic setting. We can now show that the expressiveness of SREs matches that of CRAs over $\R^+$ with linear updates.

\begin{theorem}[Kleene--Sch\"{u}tzenberger Theorem for Rational Stochastic Languages]
\label{thm:sre-equivalence}
A function \( f : \Sigma^+ \to \mathbb{R}_+ \) is a rational stochastic language if and only if there exists an SRE \( r \) such that \( f = \semantics{r} \).
\end{theorem}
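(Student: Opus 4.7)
The plan is to prove both directions of the equivalence, with the easy direction handled by structural induction on SREs and the harder direction handled by a normalization-aware state elimination on weighted automata. Throughout I will use the fact (established earlier in Section~\ref{sec:models}) that for a CRA with linear updates the total weight is computable via the linear system in \cref{thm:linear-system-convergence}, which I will use both to verify stochasticity at each induction step and to carry out renormalization during state elimination.

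For the direction $(\Leftarrow)$ I proceed by structural induction on the SRE $r$. In the base case, $\delta_\sigma$ is realized by a two-state weighted automaton with a single $\sigma$-transition of weight $1$, initial weight $1$ on the start state, and final weight $1$ on the target state. For the inductive cases, $\alpha r_1 + (1-\alpha) r_2$ is realized by the disjoint union of the automata for $r_1$ and $r_2$ with initial weights scaled by $\alpha$ and $1-\alpha$; the Cauchy product $r_1 \cdot r_2$ is realized by the standard concatenation construction in which final weights of the first automaton are glued to initial weights of the second; and the discounted Kleene star $r^*_\alpha$ is realized by taking the automaton for $r$, scaling all final weights by $\alpha$, and adding back-transitions to the initial distribution each weighted by $1-\alpha$. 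Each construction preserves non-negativity of weights by inspection, and preserves the stochasticity of the semantics by the semantic identities that define convex combination, Cauchy product, and \cref{def:discounted-kleene}.

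For the direction $(\Rightarrow)$, let $\calA$ be a weighted automaton over $\R^+$ whose semantics is a stochastic language. The plan is to first put $\calA$ into a \emph{probabilistic normal form} in which, for every state $q$ reachable with nonzero weight, the sum of outgoing transition weights plus the final weight equals $1$; this can be done using the solution vector $s_q$ from \cref{thm:linear-system-convergence} to perform a similarity transform $\rho(q,\sigma,q') \mapsto s_q^{-1} \rho(q,\sigma,q') s_{q'}$ that preserves the overall semantics while rescaling row sums to $1$. I then perform state elimination in this normalized automaton: when eliminating a state $q$ with self-loop weight $c < 1$, the contribution of the loop becomes a discounted Kleene star $(\cdot)^*_{1-c}$ applied to the appropriately formed sub-expression; the paths routed through $q$ become Cauchy products of the in- and out-sub-expressions; and when combining parallel sub-expressions between a fixed pair of states I group them as nested convex combinations $\alpha r_1 + (1-\alpha) r_2$, reading off the coefficients from the row sums, which by the normal form remain in $[0,1]$ and sum to $1$ together with the surviving finalization. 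Finally, the expression read off at the initial state is the desired SRE.

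The main obstacle is the reverse direction, and within it, two coupled issues: (i) maintaining the probabilistic normal form after each state elimination, which requires re-solving a smaller instance of \cref{thm:linear-system-convergence} on the reduced automaton so that row sums stay equal to $1$; and (ii) ensuring that every self-loop slated for elimination has weight strictly less than $1$, since $\alpha = 0$ is not permitted in the discounted Kleene star. Issue (i) is a bookkeeping argument that I would discharge by showing the linear system on the reduced automaton inherits a solution from the original, while issue (ii) reduces to observing that a row-stochastic state with a self-loop of weight $1$ would contradict convergence of $\sum_w \semantics{\calA}(w)$, so such states cannot occur in an automaton whose semantics is a stochastic language.
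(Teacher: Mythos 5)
Your proposal follows essentially the same route as the paper, which itself offers only a two-sentence sketch deferring to the classical Kleene--Sch\"utzenberger argument: a structure-preserving inductive translation from SREs to weighted automata for one direction, and a normalization-plus-state-elimination argument (with the discounted star absorbing self-loops and convex combinations absorbing parallel edges) for the converse. Your version is in fact more detailed than the paper's own proof, and the bookkeeping you flag as obstacle (i) largely comes for free: once the weight-pushing via the per-state totals from \cref{thm:linear-system-convergence} makes every row sum (outgoing mass plus finalization) equal to $1$, each elimination step preserves that invariant automatically, because the eliminated state's residual row sum $1 - c_{q,q}$ cancels the factor $(1-c_{q,q})^{-1}$ introduced by the star, so no re-solving of the linear system is needed.
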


The proof of the theorem follows the structure of the proof the Kleene-Sch\"{u}tzenberger theorem for formal power series \cite{Schtzenberger1961}. The SRE can be effectively translated into an equivalent CRA with linear updates, via a polynomial-size structure-preserving inductive construction. 

Stochastic regular expressions provide a compositional and algebraically transparent language for representing rational stochastic languages. Since convex combinations of geometric distributions can be expressed using SREs, and such mixtures are dense in the space of stochastic languages with respect to the $\ell_1$ norm, SREs offer a compact and expressive formalism for approximating arbitrary stochastic languages. The syntactic nature of SREs enables tractable symbolic manipulation, making them a practical and interpretable representation for the analysis, learning, and testing of probabilistic models. The SRE also serves as a probabalistic generator due to its structure. 

\section{Identity Testing for Stochastic Languages}
\label{sec:testing}
A fundamental problem in distributional analysis is that of \emph{identity testing}: given access to a sampler for an unknown distribution \( P \), determine whether \( P \) is equal to a known reference distribution \( Q \), or whether it is \emph{far} from \( Q \) under some suitable distance measure. In the context of stochastic languages, identity testing poses new challenges due to the infinite and discrete nature of the domain. In this section, we introduce our final contribution which is an identity testing method for Stochastic Languages based on the works of Canonne et. al. \cite{Canonne2022Price}.

\begin{definition}[Identity Testing for Stochastic Languages]
Let \( Q \in \Stoch^{\mathrm{rat}}_{\mathbb{R}^+}(\Sigma^+) \) be a known rational stochastic language, and suppose we have sample access to an unknown distribution \( P \) over \( \Sigma^+ \). Given a distance parameter \( \varepsilon > 0 \), and confidence level \( \delta \in (0,1) \), an identity tester is a randomized algorithm that, with probability at least \( 1 - \delta \), distinguishes between the following two cases:
\begin{itemize}
    \item \textbf{(Completeness)} \( \|P - Q\|_1 < \varepsilon \) \quad \( \Rightarrow \) \textbf{Accept},
    \item \textbf{(Soundness)} \( \|P - Q\|_1 > \varepsilon \) \quad \( \Rightarrow \) \textbf{Reject}.
\end{itemize}
The number of samples drawn from \( P \) required to achieve this guarantee is called the \emph{sample complexity} of the tester.
\end{definition}

We choose to work with the $\ell_1$ distance primarily because it aggregates discrepancies across the entire support and corresponds to total variation distance, a natural metric for hypothesis testing and distributional distinguishability. In contrast, the $\ell_\infty$ (supremum) distance captures the worst-case pointwise deviation and can be overly sensitive in settings with low-probability outliers (\cref{alg:linfty} offers a sketch of a algorithm for identity testing with $\ell_\infty$ norm. We also avoid using Kullback–Leibler (KL) divergence, which, while sensitive to differences in the tails, may underemphasise the bulk of the distribution and is undefined when the support of $Q$ is not contained in the support $P$.

\subsection{Reducing Stochastic Languages to Finite Support}
\label{subsec:algorithm}
We present an identity testing algorithm for stochastic languages such that given \( \varepsilon > 0 \), we distinguishes between the following two cases:
\begin{itemize}
    \item \textbf{(Completeness)} \( \|P - Q\|_1 < \varepsilon \) \quad \( \Rightarrow \) \textbf{Accept},
    \item \textbf{(Soundness)} \( \|P - Q\|_1 > \frac{5\varepsilon}{3} \) \quad \( \Rightarrow \) \textbf{Reject}.
\end{itemize}

with confidence $0.8$, by reducing the problem to the classical finite-domain setting. The key idea is to truncate the support of the distributions to strings of bounded length, enabling the use of standard finite-sample testers. Let \( \varepsilon > 0 \) be the target distance parameter, and let \( Q \in \Stoch^{\mathrm{rat}}_{\mathbb{R}^+}(\Sigma^*) \) be the known reference distribution. We assume the unknown distribution \( P \) is given syntactically as a stochastic regular expression (SRE), which defines a rational stochastic language.

We select a truncation length \( \theta \in \mathbb{N} \) such that the cumulative mass of \( Q \) over strings of length at most \( \theta \) satisfies: $Q_{\le \theta} := \sum_{x \in \Sigma^{\le \theta}} Q(x) > 1 - \frac{\varepsilon}{3}$. This ensures the tail mass satisfies \( Q_{> \theta} < \frac{\varepsilon}{3} \). Since \( Q \) is given as a stochastic regular expression, we can compute a suitable truncation threshold \( \theta \) based on the structure of the expression as follows:
\begin{enumerate}[label=(\alph*), leftmargin=*]
    \item \( \theta(\sigma) = 1 \) for any symbol \( \sigma \in \Sigma \),
    \item \( \theta(r_1 + r_2) = \max\{\theta(r_1), \theta(r_2)\} \),
    \item \( \theta(r_1 \cdot r_2) = \theta(r_1) + \theta(r_2) \),
    \item \( \theta(r^*_\alpha) = \theta(r) \cdot \left\lceil \frac{\log(\varepsilon/3)}{\log(1 - \alpha)} \right\rceil \),
\end{enumerate}
where \( \alpha \in (0,1) \) is the discount factor in the probabilistic Kleene star. Each rule conservatively bounds the length needed to capture at least \( 1 - \frac{\varepsilon}{3} \) mass of the distribution defined by \( Q \), ensuring the tail contributes negligibly to the overall \( \ell_1 \) distance. We leave this proof as an exercise for the enthusiastic reader. In general, if a distribution $r \in \Stoch(\Sigma)$ satisfies the subexponential tail bound $\sum_{|w| \geq n} r^*_\alpha(w) \leq C \beta^n$ (for  $C > 0$, and $\beta \in (0,1)$),
then for any $\varepsilon > 0$, the $\left(1-\frac{\varepsilon}{3}\right)$-probability mass is concentrated on strings of length at most $\left\lceil \frac{\log(\varepsilon/C)}{\log \beta} \right\rceil$. 

We then draw samples from \( P \), discard those with \( |x| > \theta \), and construct the empirical distribution \( \hat{P}_{\le \theta} \) over \( \Sigma^{\le \theta} \). We compare \( \hat{P}_{\le \theta} \) and \( Q_{\le \theta} \) using a finite-domain tolerant identity tester (such as Canonne \cite{Canonne2022Price}) with error parameters \( \left(\frac{\varepsilon}{3}, \varepsilon \right)\).
We formally present this procedure as $\ell_1$-\textsc{IdentityTester} in Algorithm~\ref{alg:l1-identity-tester}. $\ell_1$-\textsc{IdentityTester} allows us to lift the result in identity testing of distributions over finite domains to testing stochastic languages. 

\begin{algorithm}
\caption{$\ell_1$-\textsc{IdentityTester} for Rational Stochastic Languages}
\label{alg:l1-identity-tester}
\SetAlgoLined
\DontPrintSemicolon

\textbf{Input:} Known distribution \( Q \in \Stoch^{\mathrm{rat}}_{\mathbb{R}^+}(\Sigma^*) \), given as an SRE;\\
\hspace{2em} Sample access to an unknown distribution \( P \) over \( \Sigma^* \);\\
\hspace{2em} Accuracy parameter \( \varepsilon > 0 \),

\textbf{Output:} \textsc{Accept} if \( \|P - Q\|_1 < \varepsilon \), \textsc{Reject} if \( \|P - Q\|_1 \ge \frac{5\varepsilon}{3} \) with confidence 0.8.

\BlankLine
\textbf{Truncation:} Compute \( \theta \in \mathbb{N} \) such that
\[
\sum_{x \in \Sigma^{> \theta}} Q(x) < \frac{\varepsilon}{3}.
\]

\textbf{Sampling:} Draw $
N = \widetilde{\Theta}\left( \frac{\sqrt{k}}{\varepsilon^2} + \frac{k}{\log k} \right)$
i.i.d.\ samples from \( P \), where $n = |\Sigma|^{\theta + 1}$.

\textbf{Truncation:} Discard all samples \( x \) such that \( |x| > \theta \). Let \( \hat{P}_{\le \theta} \) denote the empirical distribution over \( \Sigma^{\le \theta} \).

\textbf{Identity Test:} Let \( Q_{\le \theta} \) denote the restriction of \( Q \) to \( \Sigma^{\le \theta} \). Run the tolerance tester to compare \( \hat{P}_{\le \theta} \) with \( Q_{\le \theta} \), using threshold \( \left(\frac{\varepsilon}{3}, \varepsilon\right) \) (Cannone et. al. \cite{Canonne2022Price}).

\textbf{Output:} Return the output of the identity tester.
\end{algorithm}

\subsection{Correctness and Analysis}
\label{sec:correctness}

We now establish the correctness of the $\ell_1$-\textsc{IdentityTester} presented in Algorithm~\ref{alg:l1-identity-tester}:

\begin{theorem}[Correctness and Sample Complexity]
\label{thm:correctness}
Let \( Q \in \Stoch^{\mathrm{rat}}_{\mathbb{R}^+}(\Sigma^*) \) be a known rational stochastic language given as a stochastic regular expression, and let \( P \) be an unknown distribution over \( \Sigma^* \) with sample access. Fix error parameter \( \varepsilon > 0 \) Then Algorithm~\ref{alg:l1-identity-tester} satisfies the following with probability at least 0.8:
\begin{itemize}
    \item \textbf{(Completeness)} If \( \|P - Q\|_1 < \varepsilon \), the algorithm outputs \textsc{Accept}.
    \item \textbf{(Soundness)} If \( \|P - Q\|_1 \ge \varepsilon \), the algorithm outputs \textsc{Reject}.
\end{itemize}
The number of samples required is $N = \widetilde{\Theta}\left( \frac{\sqrt{k}}{\varepsilon^2} + \frac{k}{\log k} \right)$,
where \( \theta \in \mathbb{N} \) is the truncation length satisfying \( \sum_{x \in \Sigma^{> \theta}} Q(x) < \frac{\varepsilon}{3} \) computed from the syntactic structure of \( Q \) and $k = |\Sigma|^{\theta+1}$.
\end{theorem}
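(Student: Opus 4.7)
The plan is to reduce infinite-domain identity testing to a finite-domain instance by truncating at length $\theta$, collapsing the tails into a single auxiliary bucket, and invoking the tolerant identity tester of Canonne et al.\ as a black box on the resulting distributions of effective support size $k = |\Sigma|^{\theta+1}$. The overall sample complexity is inherited from that tester, so the bulk of the work lies in the truncation analysis and in accounting for the discarded samples.

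First I would verify by structural induction on the SRE defining $Q$ that the rules (a)--(d) of Section~\ref{subsec:algorithm} produce a $\theta$ with $Q_{>\theta} := \sum_{|x|>\theta} Q(x) < \varepsilon/3$. The atomic and convex-combination cases are immediate, the Cauchy product case uses $|w_1 w_2| \le \theta(r_1) + \theta(r_2)$, and the discounted Kleene star case uses the geometric tail identity $\sum_{k \ge K} \alpha(1-\alpha)^{k-1} = (1-\alpha)^{K-1}$, which is at most $\varepsilon/3$ for the $K$ specified by rule~(d).

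Next I would define collapsed distributions $\widetilde P$ and $\widetilde Q$ on $\Sigma^{\le \theta} \cup \{\bot\}$ by keeping the heads unchanged and placing all tail mass on a fresh symbol $\bot$. Splitting $\|P-Q\|_1$ into head and tail contributions yields the two-sided bound
\begin{equation*}
\|P-Q\|_1 - 2\min(P_{>\theta}, Q_{>\theta}) \;\le\; \|\widetilde P - \widetilde Q\|_1 \;\le\; \|P-Q\|_1 ,
\end{equation*}
so the two distances agree to within $2\varepsilon/3$. I would then draw $N$ i.i.d.\ samples from $P$, route those of length $>\theta$ into the $\bot$ bucket, and form the empirical estimate $\widehat{\widetilde P}$. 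A Chernoff bound controls the deviation of the bucket counts so that with probability at least $0.9$ the empirical distribution behaves, for the purposes of the downstream tester, like one drawn from $\widetilde P$ on a domain of size $k+1$.

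Finally I would feed $\widehat{\widetilde P}$ and $\widetilde Q$ to the tolerant identity tester of \cite{Canonne2022Price}, whose sample complexity is $\widetilde\Theta(\sqrt{k}/\varepsilon^2 + k/\log k)$ on this finite domain; its accept/reject guarantees, combined with the two-sided truncation bound and a union bound with the Chernoff event, yield completeness ($\|P-Q\|_1 < \varepsilon \Rightarrow$ \textsc{Accept}) and soundness ($\|P-Q\|_1 \ge \varepsilon \Rightarrow$ \textsc{Reject}) at overall confidence $0.8$. The main obstacle is the choreography between the tolerant tester's internal accept/reject gap and the $2\varepsilon/3$ truncation slack: both completeness and soundness in the theorem are stated at the same threshold $\varepsilon$, so the tester's parameters must be tuned tightly enough that the head-versus-tail split absorbs the slack on one side while the Chernoff concentration sharpens the other, rather than rescaling $\varepsilon$ and losing a constant in the sample complexity.
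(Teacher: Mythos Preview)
Your high-level plan---truncate at $\theta$, reduce to a finite domain of size $k=|\Sigma|^{\theta+1}$, and invoke the tolerant tester of \cite{Canonne2022Price}---is exactly the paper's approach. The paper's own argument is terse: it splits $\|P-Q\|_1$ into the head sum over $\Sigma^{\le\theta}$ and the tail sum over $\Sigma^{>\theta}$, bounds the tail by $\max(P_{>\theta},Q_{>\theta})\le 2\varepsilon/3$ under the closeness hypothesis, and defers the head to the black-box tester with parameters $(\varepsilon/3,\varepsilon)$. Your $\bot$-bucket formulation and explicit Chernoff step are additions the paper omits, but they are refinements of the same skeleton rather than a different route.

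Where your proposal goes beyond the paper is in correctly flagging the real obstacle: the theorem, as stated, has \emph{no gap}---completeness at $\|P-Q\|_1<\varepsilon$ and soundness at $\|P-Q\|_1\ge\varepsilon$ together cover all of $\Stoch(\Sigma)$, and no finite-sample tester can separate these with constant confidence (by a standard continuity argument at the boundary). The paper's own proof does not close this gap either: it concludes that ``the total error is bounded by $(\varepsilon,5\varepsilon/2)$,'' and the preceding discussion in Section~\ref{subsec:algorithm} promises soundness only for $\|P-Q\|_1>5\varepsilon/3$. So the paper in fact proves a gap version of the theorem, and the statement as written appears to be a typo. Your final paragraph suggests the slack can be absorbed by ``tuning'' the tolerant tester's parameters and sharpening with Chernoff; it cannot. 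A tolerant tester always has a nontrivial accept/reject gap, and the $2\varepsilon/3$ truncation slack only widens it. You should either reinstate the $5\varepsilon/3$ soundness threshold from Section~\ref{subsec:algorithm} or state the result as a $(c_1\varepsilon,c_2\varepsilon)$-tolerant test for explicit constants $c_1<c_2$, rather than claim the gapless version.
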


Observe that the total \( \ell_1 \) distance between \( P \) and \( Q \) decomposes as:
\[
\|P - Q\|_1 = 
\sum_{x \in \Sigma^{\le \theta}} |P(x) - Q(x)| + \sum_{x \in \Sigma^{> \theta}} |P(x) - Q(x)|.
\]
The first (head) term is bounded by \( \left(\frac{\varepsilon}{3}, \varepsilon \right) \) via the tolerance identity test. The second (tail) term is at most \( \max(P_{> \theta}, Q_{> \theta}) \), which is bounded by \( \frac{2\varepsilon}{3} \) assuming \( P \) and \( Q \) are \( \varepsilon \)-close. Hence, the total error is bounded by \( \left(\varepsilon, \frac{5\varepsilon}{2}\right)\). The sample complexity also follows from the proof of tolerance testing algorithm used (Canonne et. al. \cite{Canonne2022Price}).

\subsection{Trade-offs in Testing Stochastic Languages}

The efficacy of our identity testing framework is fundamentally governed by three key factors:

\begin{enumerate}
    \item The truncation threshold $\theta = \lceil \log(\varepsilon/3C)/\log \beta \rceil$ leads to a support size $k = |\Sigma|^{\theta+1}$. While this remains tractable for rapidly decaying distributions ($\beta \ll 1$), it becomes prohibitive for heavy-tailed cases ($\beta \approx 1$), where $\theta$ scales inversely with $\varepsilon$.
    \item Our proposed method critically leverages the syntactic structure of stochastic regular expressions (SREs) to compute $\theta$. For arbitrary distributions without such representations, estimating the truncation point is as hard. 
    \item Although the algorithm achieves information-theoretically optimal sample complexity $\widetilde{\Theta}(\sqrt{k}/\varepsilon^2)$, its exponential space requirements limit practical deployment to small alphabets, bounded SRE depth, and rapid decay. 
\end{enumerate}

We also explore $\ell_\infty$-testing through \cref{alg:linfty} that verifies $\max_w |P(w) - Q(w)| < \varepsilon$. This approach leverages two key insights: (1) identification of $\varepsilon$-heavy hitters via $\Theta(\varepsilon^{-1}\log(\varepsilon^{-1}\delta^{-1})$ samples (\cref{lemma:infity-2}), and (2) uniform convergence of empirical estimates with $O(\varepsilon^{-2}\log(k/\delta))$ samples for $k$-support distributions (\cref{lemma:infity-2}). While more sample-efficient for pointwise guarantees, this method requires exact knowledge of $Q$'s support and loses sensitivity to aggregate deviations—making it preferable for applications demanding per-string accuracy over global distributional closeness.
\section{Conclusion and Future Work}
\label{sec:conclusion}

In this paper, we have initiated the theoretical study of identity testing for stochastic languages, extending classical distributional property testing from finite domains to the infinite but structured domain of formal languages. Our work addresses a fundamental gap in the literature by providing the first polynomial-time algorithms for distinguishing between stochastic languages under the $\ell_1$ distance metric. 

We establish three key theoretical results for rational stochastic languages: (1) polynomial-time decidability of validity for non-negative CRAs; (2) a universal approximation theorem via geometric mixtures of deterministic automata; and (3) polynomial-time identity testing between a known and an unknown distributions.

Our work demonstrates that despite the infinite and discrete nature of string domains, principled statistical testing remains feasible when the underlying distributions admit finite algebraic representations. The polynomial-time algorithms and sample complexity bounds we establish are comparable to those achieved in classical finite-domain settings, suggesting that the structural constraints imposed by rational stochastic languages provide sufficient regularity for efficient inference.

\subsection*{Future Work}

The theoretical foundation established in this work opens numerous avenues for both theoretical development and practical application. 

\begin{enumerate}
    \item Our proposed testing algorithm adapts finite-support identity testers via truncation, yielding a simple and practical solution. However, the current sample complexity bound of is not tight; we conjecture that sharper bounds can be derived by exploiting deeper structural properties of the underlying automata or distributions. 
    \item While our approach relies on regularity for finite-state representations, extending it to richer models like Probabilistic Context-Free Grammars would significantly broaden its applicability in natural language processing and computational linguistics. The challenge lies in maintaining computational tractability while accommodating the increased expressiveness of context-free models.
    \item We would like to explore different measures of distributional difference that are more tail-sensitive such as KL divergence. 
    \item Recent advances in adaptive testing and instance-optimal algorithms could further refine our framework's efficiency. Algorithms that adjust their sampling strategy based on observed data could achieve better instance-optimal bounds that depend on the specific reference distribution rather than worst-case parameters. 
    \item The equivalence of weighted automata and learning models such as RNNs suggests natural applications in quantifying divergence for neural sequence models. Developing methods to extract rational stochastic language representations from trained neural networks would enable the application of our testing framework to modern deep learning models. This connection could provide theoretical foundations for understanding the expressiveness and convergence properties of neural language models, bridging the gap between classical automata theory and contemporary machine learning.
    \item While our theoretical results establish feasibility and complexity bounds, empirical evaluation on real-world datasets would provide crucial insights into practical performance.
\end{enumerate}

As the importance of probabilistic models over structured domains continues to grow across machine learning, computational linguistics, and program analysis, we expect identity testing for stochastic languages to become an increasingly important area of research.

\bibliography{references}

\appendix

\section{Supplementary Definitions}

\begin{definition}[Sub-exponential Distribution]
\label{def:subexp}
A distribution $P$ over $\Sigma^*$ is \textit{sub-exponential} if there exist constants $c, K > 0$ such that the tail probabilities satisfy:
\[
\mathbb{P}_{w \sim P}\left(|w| \geq t\right) \leq c e^{-t/K} \quad \text{for all } t \geq 0.
\]
\end{definition}

One can prove that the distribution in \cref{ex:poisson-power-series} is sub-exponential while $\overline{r}$ in \cref{ex:zeta-example} is not. In particular, the distribution $\overline{r}$ in \cref{ex:zeta-example} does not have a finite first moment. A probability distributions is called \emph{fat tailed} if it does not have finite moments of all orders. One can prove that all sub-exponential distributions are \emph{light tailed}. The divergence in the tail behaviour of probability distributions is measured using Kullback-Leibler (KL) Divergence \cite{kullback1951}.

\begin{definition}[Kullback-Leibler Divergence]
Let $P$ and $Q$ be discrete probability distributions over a sample space $\mathcal{X}$.
The KL divergence $D_{\text{KL}}(P \parallel Q)$ is defined as:
\[
D_{\text{KL}}(P \parallel Q) = \sum_{x \in \mathcal{X}} P(x) \log \frac{P(x)}{Q(x)},
\]
where, by convention:
\begin{enumerate}
    \item $0 \log \frac{0}{Q(x)} = 0$,
    \item $P(x) \log \frac{P(x)}{0} = +\infty$ if $P(x) > 0$,
\end{enumerate}
and $D_{\text{KL}}(P \parallel Q) = +\infty$ if there exists $x$ such that $Q(x) = 0$ and $P(x) > 0$.
\end{definition}

\section{Supplementary Proofs}

\begin{theorem}
\label{thrm:app:affine}
CRAs with linear functions are as expressive as CRAs with affine functions.
\end{theorem}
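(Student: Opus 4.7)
The plan is to prove both inclusions, of which one is trivial and the other is a standard lifting. Since every linear function can be written as an affine function with zero constant term, any CRA with linear updates and linear finalization is automatically a CRA with affine updates and affine finalization; this gives one containment immediately.

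For the reverse direction, the approach is to simulate an affine CRA by a linear CRA with one extra register that is kept constant at $1$. Concretely, given an affine CRA $\calA = (\Sigma, Q, X, q_0, x_0, \delta, \rho, \mu)$ with $|X| = d$ and
\[
\rho(q, \sigma, x) = A_{q,\sigma} x + b_{q,\sigma}, \qquad \mu(q, x) = c_q^{\tau} x + d_q,
\]
I would construct a linear CRA $\calA'$ with register set $X' = X \cup \{z\}$, initial valuation $x'_0 = (x_0, 1)$, the same state set and transition function, and block-structured linear updates
\[
\rho'(q, \sigma, (x, z)) = \begin{pmatrix} A_{q,\sigma} & b_{q,\sigma} \\ 0 & 1 \end{pmatrix} \begin{pmatrix} x \\ z \end{pmatrix},
\qquad
\mu'(q, (x, z)) = \begin{pmatrix} c_q \\ d_q \end{pmatrix}^{\tau} \begin{pmatrix} x \\ z \end{pmatrix}.
\]
Both $\rho'$ and $\mu'$ are linear in $(x,z)$ by construction.

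The verification proceeds by a straightforward induction on the length of the input word $w = w_1 \cdots w_n$. The induction hypothesis is that after reading the prefix $w_1 \cdots w_i$ the register valuation in $\calA'$ equals $(x_i, 1)$, where $x_i$ is the register valuation of $\calA$ after the same prefix. The base case holds by choice of $x'_0$, and the inductive step follows from the block structure of $\rho'$, which preserves the last coordinate and reproduces the affine update on the first $d$ coordinates. Applying $\mu'$ to the final state then yields $c_{q_n}^{\tau} x_n + d_{q_n} = \mu(q_n, x_n) = \semantics{\calA}(w)$, so $\semantics{\calA'} = \semantics{\calA}$.

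I do not anticipate a real obstacle here: the entire argument is algebraic bookkeeping once one fixes the right augmentation. The only point requiring some care is checking that the paper's syntactic notion of a ``CRA with linear updates'' indeed permits arbitrary matrix updates on the augmented register space (it does, since it places no constraint beyond linearity in the current register valuation), and that the constant $1$ in the initial valuation is allowed (the definition of $x_0 \colon X \to \R$ imposes no restriction). Hence $\calA'$ is a genuine linear CRA computing the same semantics as $\calA$, completing the equivalence.
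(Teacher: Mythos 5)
Your proposal is correct and follows essentially the same route as the paper: both homogenize the affine maps by adjoining constant-valued registers initialized to $1$, absorb the offset vectors $b_{q,\sigma}$ into the enlarged linear update matrix, and verify equivalence by induction on the length of the run. The only (cosmetic) difference is that you use a single auxiliary register carrying $b_{q,\sigma}$ as a column, whereas the paper pads with $n$ auxiliary registers and uses $\mathrm{diag}(b_{q,\sigma})$ applied to a vector of ones; your construction is marginally more economical but the argument is the same.
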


\begin{proof}
    Consider a CRA $\calA$ with $n$ registers and affine update and finalization functions given by the matrices $A_{q, \sigma}$, vectors $b_{q, \sigma}$ and $\mu_q$, and constants $c_q$, for $q \in Q$ and $\sigma \in \Sigma$. Let $\calA'$ be a CRA with the same states and state transitions as $\calA$, but with $2n$ registers. The initial register valuation, the update matrix for $q\in Q$ and $\sigma \in \Sigma$, and the finalization vector for $q\in Q$ are, respectively,
    \[
        \begin{bmatrix} x_0 \\ 1 \end{bmatrix} \, ,
        \qquad
        \begin{bmatrix}
            A_{q, \sigma} & \mathrm{diag}(b_{q, \sigma}) \\
            0 & I
        \end{bmatrix} \, ,
        \qquad
        \begin{bmatrix} \mu_q \\ c_q \\ 0 \end{bmatrix} \, ,
    \]
    where $x_0$ is the initial register valuation for $\calA$, $1$ is an $n\times 1$ vector of ones, $\mathrm{diag}(b_{q,\sigma})$ is a diagonal matrix with the vector $b_{q,\sigma}$ on the main diagonal, the first occurrence of $0$ is an $n\times n$ zero matrix, $I$ is the $n\times n$ identity matrix, and the second occurrence of $0$ is an $(n-1)\times 1$ zero vector. By induction on the length of the run, one can show that the registers in $\calA$ have value $x$ if and only if the registers in $\calA'$ after the same run have value $\left[\begin{smallmatrix} x \\ 1\end{smallmatrix}\right]$. Hence, the two automata have the same semantics.
\end{proof}

\begin{lemma}
\label{thrm:app:kl}
Convex combinations of geometric distributions are dense in the space of sub-exponential distributions with respect to KL divergence. 
\end{lemma}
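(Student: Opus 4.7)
The plan is to mimic the truncation-based strategy of \cref{thm:universal-sup-approx} while exploiting the key advantage granted by sub-exponentiality: the tails of $P$ decay at the same rate as the tails of each $P_w^\alpha$, which lets us control $\log(P(w)/Q(w))$ uniformly. Given $P$ sub-exponential with $\mathbb{P}_{w\sim P}(|w|\ge t) \leq c\,e^{-t/K}$ and target $\varepsilon > 0$, I first fix a discount factor $\alpha$ close to $1$ and a truncation level $n$ such that the tail mass $P_{>n} := \sum_{|w|>n}P(w)$ is below some small $\delta$; sub-exponentiality allows $n = \Theta(K\log(1/\delta))$. I then set $S := \{w \in \Sigma^{\leq n} : P(w) > 0\}$, which is finite, and form the mixture $Q := \sum_{w \in S} \lambda_w\,P_w^\alpha$ with weights $\lambda_w := P(w)/P_{\leq n}$.

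The next step is to bound $D_{KL}(P\parallel Q)$ by splitting the sum at $n$. For the head, since $P_w^\alpha(w) = \alpha$ we have $Q(w) \geq \alpha \lambda_w = \alpha\,P(w)/P_{\leq n}$, hence $\log(P(w)/Q(w)) \leq \log(P_{\leq n}/\alpha) \leq \log(1/\alpha) + O(\delta)$. Summing over the head gives a head contribution bounded by $\log(1/\alpha) + O(\delta)$, which can be pushed below $\varepsilon/2$ by choosing $\alpha$ close enough to $1$ and $\delta$ small. This head bound is the straightforward part and mirrors the $\ell_1$ argument.

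The main obstacle is the tail $\{w \in \mathrm{supp}(P) : |w|>n\}$: if such a $w$ is not a power of any element of $S$, then $Q(w) = 0$ and the KL divergence becomes $+\infty$. I would address this by enlarging $S$ to include the primitive roots of every $w$ with $P(w) > 0$ and $|w| \leq N$ for a secondary, larger cutoff $N$, and by assigning each new component a carefully calibrated $\lambda$ and $\alpha$ so that the exponential tail of the mixture dominates the exponential tail of $P$. The crucial quantitative input is that sub-exponentiality forces $-\log P(w)$ to grow at most linearly in $|w|$, while $\sum_{|w|>N}P(w)$ decays exponentially, so the product-sum $\sum_{|w|>N}P(w)\log(1/P(w))$ vanishes as $N\to\infty$. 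A careful accounting of the ratio $P(w)/Q(w)$ for tail strings, combining the sub-exponential decay of $P$ with the geometric mass supplied by each $P_w^\alpha$ component, then bounds the tail contribution by $\varepsilon/2$, yielding $D_{KL}(P\parallel Q) < \varepsilon$. The same approach cannot work for fat-tailed distributions, since $-\log P(w)$ would then grow polynomially while $Q(w)$ still decays only geometrically, producing an unbounded log-ratio; this dichotomy explains why the analogous $\ell_1$ density result (\cref{thm:universal-sup-approx}) extends to all of $\Stoch(\Sigma)$ but the KL version is restricted to the sub-exponential class.
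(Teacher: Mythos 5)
Your head-term analysis is sound and follows the same truncate-then-mix strategy as the paper's proof (the paper tunes a per-component parameter $\alpha_w=1-\exp(-\mathcal{D}(w)/|w|)$ rather than a single $\alpha$, but your uniform bound $Q(w)\ge\alpha\lambda_w$, giving a head contribution of at most $\log(1/\alpha)+O(\delta)$, is the cleaner way to make that step rigorous). You have also correctly isolated the real difficulty, which the paper's proof passes over: any string in $\mathrm{supp}(P)$ lying outside the support of the finite mixture forces $D_{\mathrm{KL}}(P\parallel Q)=+\infty$.

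However, your proposed repair does not close this gap. The support of any \emph{finite} convex combination $\sum_{i=1}^{m}\lambda_i P_{u_i}^{\alpha_i}$ is the finite union of power chains $\bigcup_{i}\{u_i^k : k\ge 1\}$, and enlarging $S$ to contain the primitive roots of all support strings up to a secondary cutoff $N$ only relocates the problem past $N$: for a sub-exponential $P$ with infinite support over an alphabet with $|\Sigma|\ge 2$ (e.g.\ $P(w)\propto(2|\Sigma|)^{-|w|}$, which satisfies \cref{def:subexp} but whose support is all of $\Sigma^+$), every finite $S$ leaves strings $w$ with $P(w)>0$ and $Q(w)=0$, so the tail term is $+\infty$, not $o(1)$. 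The quantity you invoke, $\sum_{|w|>N}P(w)\log(1/P(w))\to 0$, is the relevant one only \emph{conditional on} $Q(w)$ being bounded below by a fixed power of $P(w)$ on the tail, and no calibration of the $\lambda$'s and $\alpha$'s can arrange that off the power chains. The ways out are either to allow countably infinite mixtures --- one component $P_w^{\alpha}$ per $w\in\mathrm{supp}(P)$ with $\lambda_w\propto P(w)$ gives $D_{\mathrm{KL}}(P\parallel Q)\le\log(1/\alpha)$ directly, but then sub-exponentiality plays no role and the representation is no longer finite --- or to restrict to distributions whose support is covered by finitely many power chains. Be aware that the paper's own proof does not resolve this either: its final step chains $\|\mathcal{D}-\mathcal{D}_L\|_1$ with $\mathrm{KL}(\mathcal{D}_L\|\tilde{\mathcal{D}})$ via a ``triangle inequality'' that KL divergence does not satisfy, and its appeal to Pinsker is in the wrong direction, so you should not take that argument as evidence that your tail step can be completed as sketched.
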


\begin{proof}
Let $\mathcal{D}$ be a subexponential distribution over $\Sigma^*$ with tail bound $\sum_{|w| \geq n} \mathcal{D}(w) \leq C\beta^n$. We construct an approximant as follows: 

\begin{enumerate}
    \item For $\varepsilon > 0$, truncate $\mathcal{D}$ to strings $|w| \leq L := \lceil \log(\varepsilon/2C)/\log \beta \rceil$, yielding $\mathcal{D}_L$ with $\|\mathcal{D} - \mathcal{D}_L\|_1 \leq \varepsilon/2$.
    \item For each $w \in \mathrm{supp}(\mathcal{D}_L)$, define $P_w^{\alpha_w}$ with $\alpha_w = 1 - \exp(-\mathcal{D}(w)/|w|)$. The convex combination $\tilde{\mathcal{D}} = \sum_{w} \mathcal{D}_L(w) P_w^{\alpha_w}$
satisfies $\mathrm{KL}(\mathcal{D}_L \| \tilde{\mathcal{D}}) \leq \varepsilon/2$ by Pinsker's inequality.
\item Via the triangle inequality, we have: $\mathrm{KL}(\mathcal{D} \| \tilde{\mathcal{D}}) \leq \|\mathcal{D} - \mathcal{D}_L\|_1 + \mathrm{KL}(\mathcal{D}_L \| \tilde{\mathcal{D}}) \leq \varepsilon$.
\end{enumerate}

\end{proof}

\begin{lemma}
\label{thrm:app:sre}
If $r \in \Stoch(\Sigma)$, then for all discount factors $\alpha \in (0,1)$, $r^*_\alpha \in \Stoch(\Sigma)$.
\end{lemma}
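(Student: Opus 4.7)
The plan is to verify the two conditions that define membership in $\Stoch(\Sigma)$: non-negativity of every coefficient $r^*_\alpha(w)$ and normalization $\sum_{w \in \Sigma^+} r^*_\alpha(w) = 1$. Non-negativity is immediate from the defining series in \cref{def:discounted-kleene}, since every summand is a finite product of non-negative factors ($r(w_i) \geq 0$ together with $\alpha, (1-\alpha) \in (0,1)$). The substantive content lies entirely in the normalization identity, which I would establish by a Fubini/Tonelli-style rearrangement followed by the standard geometric sum.

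Starting from
\begin{equation*}
\sum_{w \in \Sigma^+} r^*_\alpha(w) = \sum_{w \in \Sigma^+} \sum_{k=1}^\infty \sum_{\substack{w_1,\ldots,w_k \in \Sigma^+ \\ w_1 \cdots w_k = w}} \alpha (1-\alpha)^{k-1} \prod_{i=1}^k r(w_i),
\end{equation*}
I would invoke Tonelli's theorem, justified by the non-negativity of all summands, to move the sum over $k$ to the outside and to merge the outer sum over $w \in \Sigma^+$ with the inner factorization constraint. The key structural observation is that the concatenation map $(w_1,\ldots,w_k) \mapsto w_1 \cdots w_k$ is a bijection between $(\Sigma^+)^k$ and the set of pairs $(w,(w_1,\ldots,w_k))$ with $w_1 \cdots w_k = w$; this is precisely where the restriction to non-empty strings matters, since allowing $\varepsilon$-factors would spoil injectivity. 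After this rearrangement,
\begin{equation*}
\sum_{w \in \Sigma^+} r^*_\alpha(w) = \sum_{k=1}^\infty \alpha (1-\alpha)^{k-1} \sum_{(w_1,\ldots,w_k) \in (\Sigma^+)^k} \prod_{i=1}^k r(w_i).
\end{equation*}

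Next I would use the product structure of the inner sum to factor it coordinatewise:
\begin{equation*}
\sum_{(w_1,\ldots,w_k) \in (\Sigma^+)^k} \prod_{i=1}^k r(w_i) = \prod_{i=1}^k \left( \sum_{w_i \in \Sigma^+} r(w_i) \right) = 1^k = 1,
\end{equation*}
using the hypothesis that $r$ is stochastic. The overall sum then collapses to the geometric series $\sum_{k=1}^\infty \alpha (1-\alpha)^{k-1} = 1$, which is the desired identity.

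I do not expect any genuine obstacle here; the proof is essentially a bookkeeping argument, and Tonelli eliminates every convergence question at the cost of only non-negativity (which is already in hand). The only point demanding care is stating the factorization-to-tuple bijection cleanly, so that it is transparent why the restriction of $r^*_\alpha$ to $\Sigma^+$ (rather than $\Sigma^*$) is forced by the definition and why empty factors must be excluded. Once this observation is made explicit, the rest of the argument is a short algebraic identity.
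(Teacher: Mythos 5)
Your proposal is correct and follows essentially the same route as the paper's proof: Tonelli to interchange summations, factoring the inner sum coordinatewise using $\sum_{w \in \Sigma^+} r(w) = 1$, and collapsing to the geometric series $\sum_{k=1}^\infty \alpha(1-\alpha)^{k-1} = 1$. Your explicit remark about the bijection between tuples in $(\Sigma^+)^k$ and factorizations of $w$ is a welcome clarification of a step the paper leaves implicit, but it does not change the argument.
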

To prove \cref{thrm:app:sre}, we show that if (i.e., $r$ is a stochastic language over $\Sigma^+$ with 
$\sum_{w \in \Sigma^+} r(w) = 1$ and $r(\varepsilon) = 0$), then for any $\alpha \in (0,1)$, the discounted Kleene star $r_\alpha^*$ is also a 
stochastic language. Specifically, we verify that $\sum_{w \in \Sigma^+} r_\alpha^*(w) = 1$ and $r_\alpha^*(w) \geq 0$ for all $w \in \Sigma^+$.

\begin{proof}
By Definition~\ref{def:discounted-kleene},
\begin{align}
r_\alpha^*(w) = \sum_{k=1}^{\infty} \sum_{\substack{w_1, \ldots, w_k \in \Sigma^+ \\ w = w_1 \cdots w_k}} \alpha(1-\alpha)^{k-1} \prod_{i=1}^k r(w_i).
\end{align}

We compute the total mass over all $w \in \Sigma^+$:
\begin{align}
\sum_{w \in \Sigma^+} r_\alpha^*(w) &= \sum_{w \in \Sigma^+} \sum_{k=1}^{\infty} \sum_{\substack{w_1, \ldots, w_k \in \Sigma^+ \\ w = w_1 \cdots w_k}} \alpha(1-\alpha)^{k-1} \prod_{i=1}^k r(w_i).
\end{align}

Interchange the order of summation (justified by Tonelli's theorem for non-negative series, as all terms are $\geq 0$):
\begin{align}
&= \sum_{k=1}^{\infty} \sum_{w_1 \in \Sigma^+} \cdots \sum_{w_k \in \Sigma^+} \alpha(1-\alpha)^{k-1} \prod_{i=1}^k r(w_i).
\end{align}

Factorize the inner sums (since the sums over $w_1, \ldots, w_k$ are independent):
\begin{align}
&= \sum_{k=1}^{\infty} \alpha(1-\alpha)^{k-1} \left( \sum_{w_1 \in \Sigma^+} r(w_1) \right) \cdots \left( \sum_{w_k \in \Sigma^+} r(w_k) \right).
\end{align}

As with $r(\varepsilon) = 0$, we have $\sum_{w \in \Sigma^+} r(w) = 1$. Thus,
\begin{align}
&= \sum_{k=1}^{\infty} \alpha(1-\alpha)^{k-1} \underbrace{(1) \cdots (1)}_{k \text{ times}} = \sum_{k=1}^{\infty} \alpha(1-\alpha)^{k-1}.
\end{align}

This is a geometric series:
\begin{align}
&= \alpha \sum_{k=0}^{\infty} (1-\alpha)^k \quad \text{(shift index: } k' = k-1\text{)}.
\end{align}

The series $\sum_{k=0}^{\infty} (1-\alpha)^k$ converges to $\frac{1}{1-(1-\alpha)} = \frac{1}{\alpha}$ for $\alpha \in (0,1)$. Hence,
\begin{align}
&= \alpha \cdot \frac{1}{\alpha} = 1.
\end{align}

Finally, $r_\alpha^*(w) \geq 0$ for all $w$ since $\alpha > 0$, $1-\alpha > 0$, and $r(w_i) \geq 0$. Thus, $r_\alpha^* \in S(\Sigma)$. 
\end{proof}
\newpage
\begin{algorithm}
\SetAlgoLined
\DontPrintSemicolon
\caption{$\textsc{$\ell_\infty$-Identity Tester}
(q, \hat{p}, \varepsilon, \delta)$}
\label{alg:linfty}
\KwIn{Access to distribution $Q$ via samples; known infinite discrete distribution, parameters $\varepsilon$, $\delta$}
\KwOut{\texttt{ACCEPT} if empirical $d_\infty(P, Q)$ estimate is $\leq \varepsilon$, else \texttt{REJECT}}

$N \gets \left\lceil \dfrac{1}{\varepsilon} \log\!\bigl(\tfrac{1}{\varepsilon \delta}\bigr) \right\rceil$\;
H $\gets$ $N$ samples from $Q$;

We only look at the probabilities on elements of the finite set H

Let $k := |H|$ be the size of the domain $H$\;
Set $n := \left\lceil \frac{1}{2\varepsilon^2} \log\left(\frac{2k}{\delta}\right) \right\rceil$\;
Draw $n$ i.i.d. samples from $Q$ to estimate empirical distribution $\hat{q}$ on $H$\ . This will be conditional sampling.

\For{$x \in H$}{
    Compute empirical estimate $\hat{q}(x) := \frac{\#\text{ occurrences of } x}{n}$\;
    Compute $|\hat{q}(x) - \hat{p}(x)|$\;
}
Let $d := \max_{x \in H} |\hat{q}(x) - \hat{p}(x)|$\;

\uIf{$d \leq \varepsilon$}{
    \Return \texttt{ACCEPT}
}
\Else{
    \Return \texttt{REJECT}
}
\end{algorithm}

\begin{lemma}
\label{lemma:infity-1}
Let $D$ be a distribution over $\Sigma^*$. Let $H_\varepsilon = \{ w \mid D(w) \geq \varepsilon \}$ be the set of $\varepsilon$-heavy elements. Then, drawing $
N \geq \frac{1}{\varepsilon} \log\left( \frac{1}{\varepsilon \delta} \right)$ i.i.d.\ samples from $D$ ensures that, with probability at least $1 - \delta$, every $w \in H_\varepsilon$ appears at least once.
\end{lemma}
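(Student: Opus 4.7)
The plan is to carry out a standard union-bound argument over the set of heavy elements, combined with the elementary inequality $1 - x \leq e^{-x}$. The proof splits into three short steps.

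First, I would bound the cardinality of $H_\varepsilon$. Since every $w \in H_\varepsilon$ satisfies $D(w) \geq \varepsilon$ and $\sum_{w \in \Sigma^*} D(w) \leq 1$, counting mass gives $|H_\varepsilon| \leq 1/\varepsilon$. This bound is the only structural fact about $D$ that the proof needs; nothing else about the distribution (its support size, shape, tail behaviour) enters.

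Second, I would fix an arbitrary $w \in H_\varepsilon$ and compute the probability it is missed by all $N$ samples. Independence gives $\Pr[w \text{ absent}] = (1 - D(w))^N \leq (1 - \varepsilon)^N \leq e^{-\varepsilon N}$, where the last step uses $1 - x \leq e^{-x}$ for $x \in [0,1]$. Then a union bound over $H_\varepsilon$ yields
\[
\Pr\bigl[\exists\, w \in H_\varepsilon \text{ absent}\bigr] \;\leq\; |H_\varepsilon| \cdot e^{-\varepsilon N} \;\leq\; \tfrac{1}{\varepsilon}\, e^{-\varepsilon N}.
\]

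Third, I would solve for the $N$ that drives this failure probability below $\delta$. The inequality $\tfrac{1}{\varepsilon} e^{-\varepsilon N} \leq \delta$ rearranges to $N \geq \tfrac{1}{\varepsilon}\log\!\bigl(\tfrac{1}{\varepsilon \delta}\bigr)$, which is exactly the sample size stated in the lemma. Substituting back gives the desired $1 - \delta$ success guarantee.

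There is no real obstacle here; the only mild subtlety is that the size bound $|H_\varepsilon| \leq 1/\varepsilon$ is precisely what makes the factor $\tfrac{1}{\varepsilon}$ (rather than the ambient domain size, which is infinite) appear inside the logarithm, so that the sample complexity depends only on $\varepsilon$ and $\delta$ and not on $|\Sigma^*|$. Everything else is a direct application of the union bound together with the tail inequality $(1-\varepsilon)^N \leq e^{-\varepsilon N}$.
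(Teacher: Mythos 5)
Your proof is correct and follows essentially the same route as the paper's: bound $|H_\varepsilon| \le 1/\varepsilon$ by mass counting, apply $(1-\varepsilon)^N \le e^{-\varepsilon N}$ to each heavy element, union bound, and solve for $N$. Your write-up is in fact slightly more explicit than the paper's about where the $1/\varepsilon$ cardinality bound comes from, but there is no substantive difference.
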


\begin{proof}
For any $w \in H_\varepsilon$, $D(w) \geq \varepsilon$, so the probability it is missed in $N$ samples is at most $(1 - \varepsilon)^N \leq e^{-\varepsilon N}$. There are at most $1/\varepsilon$ such elements, so by the union bound, the total miss probability is at most $
\frac{1}{\varepsilon} \cdot e^{-\varepsilon N} \leq \delta$.
Solving this inequality gives the claimed bound on $N$.
\end{proof}

\begin{lemma}
\label{lemma:infity-2}
Let $P$ be a known distribution and $Q$ an unknown distribution over a domain $H$ with $|H| = k$. To ensure $\Pr\left( \max_{x \in H} |\hat{q}(x) - Q(x)| > \varepsilon \right) \leq \delta$,
it suffices to take $N \geq \frac{1}{2\varepsilon^2} \log\left( \frac{2k}{\delta} \right)$ samples from $Q$.
\end{lemma}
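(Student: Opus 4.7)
The plan is to apply a standard concentration-plus-union-bound argument. For each fixed $x \in H$, the empirical count $N \cdot \hat{q}(x)$ is a sum of $N$ i.i.d.\ Bernoulli random variables $Y_{1}(x), \ldots, Y_{N}(x)$, where $Y_{i}(x) = 1$ if the $i$th sample equals $x$ and $0$ otherwise. Each $Y_{i}(x)$ takes values in $[0,1]$ and has mean $Q(x)$, so $\hat{q}(x) = \frac{1}{N}\sum_{i=1}^{N} Y_{i}(x)$ is an average of bounded i.i.d.\ random variables with expectation $Q(x)$.

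The first step would be to apply Hoeffding's inequality to each such $x$ separately, yielding
\[
\Pr\bigl(|\hat{q}(x) - Q(x)| > \varepsilon\bigr) \;\leq\; 2\exp\bigl(-2N\varepsilon^{2}\bigr).
\]
The second step is to take a union bound across all $k$ elements of $H$, giving
\[
\Pr\!\left( \max_{x \in H} |\hat{q}(x) - Q(x)| > \varepsilon \right) \;\leq\; 2k\exp\bigl(-2N\varepsilon^{2}\bigr).
\]
The third step is to require this right-hand side to be at most $\delta$. Rearranging $2k\exp(-2N\varepsilon^{2}) \leq \delta$ gives $N \geq \frac{1}{2\varepsilon^{2}}\log\!\left(\frac{2k}{\delta}\right)$, which matches the stated bound.

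There is no real obstacle here, as the argument is a textbook uniform-convergence bound; the only subtlety worth flagging is that the coordinate indicators $Y_{i}(x)$ and $Y_{i}(x')$ for different $x,x' \in H$ are \emph{not} independent (they come from the same sample), but this does not matter for the union bound, since it operates on the marginal deviation probabilities alone and does not require joint independence across coordinates. If a sharper constant were desired, one could instead invoke Bernstein's inequality to exploit the small variance $Q(x)(1-Q(x))$, but this is not needed to match the claimed bound.
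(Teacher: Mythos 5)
Your proposal is correct and follows essentially the same argument as the paper's proof: apply Hoeffding's inequality to each $x \in H$, take a union bound over the $k$ elements, and solve $2k e^{-2N\varepsilon^2} \leq \delta$ for $N$. Your additional remark that the indicators for distinct $x, x'$ need not be independent is a fine clarification but does not change the argument.
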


\begin{proof}
Fix $x \in H$. By Hoeffding’s inequality, $\Pr(|\hat{q}(x) - Q(x)| > \varepsilon) \leq 2 e^{-2N\varepsilon^2}$. Applying the union bound over all $k$ elements of $H$, we get 
\[
\Pr\left( \max_{x \in H} |\hat{q}(x) - Q(x)| > \varepsilon \right) \leq 2k e^{-2N\varepsilon^2}.
\]
To ensure this is at most $\delta$, it suffices that $N \ge \frac{1}{2\varepsilon^2} \log\left( \frac{2k}{\delta} \right)$.
\end{proof}

\end{document}